\theoremstyle{plain}
\numberwithin{equation}{section}
\newtheorem{thm}{Theorem}[section]
\newtheorem{lem}[thm]{Lemma}
\newtheorem{cor}[thm]{Corollary}
\theoremstyle{definition}
\newtheorem{example}{Example}
\theoremstyle{definition}
\newcounter{cond}
\newcommand{\real}{{\mathbb R}}
\newcommand{\trace}{\mathrm{tr\,}}
\newcommand{\rmin}{\mathrm{In\,}}
\newcommand{\rmob}{\mathrm{Ob\,}}
\newcommand{\rmsub}{\mathrm{Sub}}
\newcommand{\ityes}{\textit{yes}}      
\newcommand{\itno}{\textit{no}}
\newcommand{\escript}{\mathcal{E}}
\newcommand{\hscript}{\mathcal{H}}
\newcommand{\iscript}{\mathcal{I}}
\newcommand{\jscript}{\mathcal{J}}
\newcommand{\kscript}{\mathcal{K}}
\newcommand{\lscript}{\mathcal{L}}
\newcommand{\oscript}{\mathcal{O}}
\newcommand{\sscript}{\mathcal{S}}
\newcommand{\iscriptbar}{\overline{\iscript}}
\newcommand{\iscripthat}{\widehat{\iscript}}
\newcommand{\jscripthat}{\widehat{\jscript}}
\newcommand{\kscripthat}{\widehat{\kscript}}
\newcommand{\ab}[1]{\left|#1\right|}
\newcommand{\brac}[1]{\left\{#1\right\}}
\newcommand{\paren}[1]{\left(#1\right)}
\newcommand{\sqbrac}[1]{\left[#1\right]}
\begin{document}

\title{iNTERVAL EFFECT ALGEBRAS AND HOLEVO INSTRUMENTS}

\author{Stan Gudder}
\address{Department of Mathematics, 
University of Denver, Denver, Colorado 80208}
\email{sgudder@du.edu}
\date{}
\maketitle

\begin{abstract}
This article begins with a study of convex effect-state spaces. We point out that such spaces are equivalent to interval effect algebras that generate an ordered linear space and possess an order-determining set of states. We then discuss operations and instruments on interval effect algebras under the assumption of an unrestrictive condition. Effects measured by operations and sequential products of effects relative to operations are considered. Observables are introduced and coexistence of effects are discussed. We also present properties of sequential products of observables and conditioning of observables related to instruments. The final section is devoted to Holevo instruments. Pure and mixed Holevo operations are defined and extended to instruments. The Holevo sequential product of two observables is defined and the marginals of these products are computed. We define the commutant of two effects and derive its properties. Examples are given that illustrate the properties of  previously presented concepts.
\end{abstract}

\section{Introduction}  
This introduction presents simplified definitions of various concepts. Complete and rigorous definitions as well as discussions of their meanings are left for later sections. The aim here is to give the reader an overview of the material presented subsequently.

The present article begins with a study of effect-state spaces $(\escript ,\sscript , F)$ where $\escript$ is the set of effects and $\sscript$ is the set of states for a physical system. An effect describes a simple \ityes-\itno\ (true-false) experiment and a state corresponds to the initial condition of a physical system. If an effect $a$ is measured and the result is \ityes\ (true), we say that $a$ \textit{occurs}. If the result is \itno\ (false), we say that $a$ \textit{does not occur}. The function $F\colon\escript\times\sscript\to\sqbrac{0,1}\subseteq\real$ gives the probability $F(a,s)$ that $a$ occurs when the system is in the state $s$. The axioms for
$(\escript ,\sscript ,F)$ imply that there is a null effect $a$ that never occurs and a unit effect $u$ that always occurs. Moreover, if $F(a,s)+F(b ,s)\le 1$ for every
$s\in\sscript$, there exists a unique $c\in\escript$, that we denote by $c=a+b$ such that $F(c,s)=F(a,s)+F(b,s)$ for all $s\in\sscript$. We say that $(\escript ,\sscript ,F)$ is a \textit{convex} effect-state space of for all $a\in\escript$ and $\lambda\in\sqbrac{0,1}\subseteq\real$ there exists an effect $\lambda a\in\escript$ such that
$s(\lambda a)=\lambda s(a)$ for every $s\in\sscript$. If $s(a)\le s(b)$ for all $s\in\sscript$, we write $a\le b$ and form the \textit{order-interval}
\begin{equation*}
\escript =\sqbrac{\theta ,u}=\brac{a\in\sscript\colon 0\le a\le u}
\end{equation*}
We point out in Section 2 that $(\escript ,\sscript ,F)$ is a convex effect-state space if and only if $\paren{\sqbrac{\theta ,u},\theta ,u,+}$ is an interval effect algebra with an order-determining set of states $\sscript$ and $\sqbrac{\theta ,u}$ generates an ordered linear space $V$.

Section 3 discusses instruments on $\sqbrac{\theta ,u}$ when it is assumed that $\sqbrac{\theta ,u}$ satisfies an unrestrictive condition. If 
$\escript _1=\sqbrac{\theta _1,u_2}$, $\escript _2=\sqbrac{\theta _2,u_2}$ are interval effect algebras, we study the set of operations $\oscript (\escript _1,\escript _2)$ from $\escript _1$ to $\escript _2$. Denoting the set of states on $\escript$ by $\sscript (\escript )$, a function $f\colon\escript\to\sqbrac{0,1}$ is a \textit{substate} if
$f(a)=\lambda s(a)$ for all $a\in\escript$, where $\lambda\in\sqbrac{0,1}$ and $s\in\sscript (\escript )$. We denote the set of substates by
$\rmsub\paren{\sscript (\escript )}$. An affine map $\iscript\colon\sscript (\escript _1)\to\rmsub\paren{\sscript (\escript _2)}$ is called an \textit{operation} from $\escript _1$ to $\escript _2$ and an operation $\iscript\colon\sscript (\escript _1)\to\sscript (\escript _2)$ is called a \textit{channel}. An \textit{instrument}
$\iscript =\brac{\iscript _x\colon x\in\Omega _\iscript}$, where $\iscript _x\in\oscript (\escript _1,\escript _2)$ and $\sum\brac{\iscript _x\colon x\in\Omega _\iscript}$ is a channel, is an operation-valued measure with \textit{outcome space} $\Omega _\iscript$. An instrument describes an experiment with outcomes in $\Omega _\iscript$ and when the outcome $x\in\Omega _\iscript$ occurs, the initial state $s\in\sscript (\escript _1)$ is updated to the (unnormalized) state
$\iscript _x(s)\in\rmsub\paren{\sscript (\escript _2)}$. We denote the set of instruments from $\escript _1$ to $\escript _2$ by $\rmin (\escript _1,\escript _2)$.

In Section 3 we show that if $\iscript\in\oscript (\escript _1,\escript _2)$, then there exists a unique affine dual map $\iscript ^*\colon\escript _2\to\escript _1$ that satisfies
$s\sqbrac{\iscript ^*(a)}=\iscript (s)(a)$ for all $s\in\sscript (\escript _1)$, $a\in\escript _2$. Also, $\iscript$ is a channel if and only if $\iscript ^*(u_2)=u_1$. We say that
$\iscript\in\oscript (\escript _1,\escript 2)$ \textit{measures} the effect $\iscripthat =\iscript ^*(u_2)$. If $\iscript$ measures $a\in\escript _1$ and $b\in\escript _2$, we define the \textit{sequential product of $a$ then $b$ relative to} $\iscript$ by $a\sqbrac{\iscript}b=\iscript ^*(b)$. Since $a$ is measured first, the measurement of $a$ can interfere with the measurement of $b$, but not vice-versa. For this reason $a\sqbrac{\iscript}b$ has regular properties as a function of $b$ but not as a function of $a$, in general.

An effect-valued measure is called an \textit{observable} and we say that effects $a,b\in\escript$ \textit{ocexist} if they are simultaneously measured by a single observable. We denote the set of observables on $\escript$ by $\rmob (\escript)$. We define the concept of a bi-observable and say that two observables \textit{coexist} if they are the marginals of a single bi-observable. If $\iscript\in\rmin (\escript _1,\escript _2)$, then $\iscript$ measures a unique observable $\iscripthat$ that satisfies
$s(\iscripthat _x)=\iscript _x(s)(u_2)$ for all $s\in\sscript (\escript _1)$. We interpret $\iscripthat _x$ as the effect that occurs when a measurement of $\iscripthat$ results in the outcome $x$. As with observables, we define bi-instruments and say that two instruments \textit{coexist} if they are the marginals of a single bi-instrument. If
$\iscript ,\jscript\in\rmin (\escript _1,\escript _2)$ coexist, we show that $\iscripthat ,\jscripthat$ coexist. Let $A\in\rmob(\escript _1)$, $B\in\rmob (\escript _2)$ and suppose $\iscript\in\rmin (\escript _1,\escript _2)$ satisfies $\iscripthat =A$. We define the \textit{sequential product of $A$ then $B$ relative to} $\iscript$ to be the
bi-observable on $\escript _1$ given by $\paren{A\sqbrac{\iscript}B}_{xy}=\iscript _x^*(B_y)$. We define $B$ \textit{conditioned by $A$ relative to} $\iscript$ by
$\paren{B\ab{\iscript}A}_y=\sum\limits _x\iscript _x^*(B_y)$. Section~3 also considers repeatable effects and proves that various conditions are equivalent to repeatability.

Section 4 discusses Holevo instruments. Let $\escript _1=\sqbrac{\theta _1,u_1}$, $\escript _2=\sqbrac{\theta _2,u_2}$ and suppose $a\in\escript _1$,
$\beta\in\sscript (\escript _2)$. A \textit{pure Holevo operation} with effect $a$ and state $\beta$ denoted by $\hscript ^{(a,\beta )}\in\oscript (\escript _1,\escript _2)$ has the form $\hscript ^{(a,\beta )}(\alpha )=\alpha (a)\beta$ for all $\alpha\in\sscript (\escript _1)$. We show that $\hscript ^{(\alpha ,\beta )*}(b)=\beta (b)a$ for all $b\in\escript _2$. It follows that $\hscript ^{(a,\beta )*}(u_2)=a$ so $\hscript ^{(a,\beta )}$ measures the effect $a\in\escript _1$. Since the state $\beta\in\sscript (\escript _2)$ can vary, this shows that an effect $a$ can be measured by many operations $\hscript ^{(a,\beta )}$ with $\paren{\hscript ^{(a,\beta )}}^\wedge=a$. If $\beta _i\in\sscript (\escript _2)$ and
$a_i\in\escript _1$, $i=1,2,\ldots ,n$ with $\sum\limits _{i=1}^na_i\le u_1$, then a \textit{mixed Holevo operation} with effects $a_i$ and states $\beta _i$ has the form
\begin{equation*}
\iscript =\sum _{i=1}^n\hscript ^{(a_i,\beta _i)}\in\oscript (\escript _1,\escript _2)
\end{equation*}
We extend this definition to define pure and mixed Holevo instruments in a natural way. The Holevo sequential product of two observables is defined and the marginals of this product are computed. We also define coexistence of observables. We discuss the commutant of two effects and derive its various properties.

Examples are given that illustrate the properties of concepts presented in this article. For instance, Hilbert space effects, observables and instruments are discussed.

\section{Interval Effect Algebras}  
Most statistical theories for physical systems contain two basic primitive concepts, namely effects and states \cite{bgl95,blpy16,dl70,hz12,kra83,nc00}. The effects
correspond to simple \ityes-\itno\ measurements or experiments and the states correspond to preparation procedures that specify the initial conditions of the system being investigated. Usually, each effect $a$ and state $s$ experimentally determine a probability $F(a,s)$ that the effect $a$ occurs when the system has been prepared in the state $s$. For a given physical system, denote its set of possible effects by $\escript$ and its set of possible states by $\sscript$. In a reasonable statistical theory, the probability function $F$ satisfies two axioms given in the following definition \cite{bgp00,gp98,gud99,gpbb00}.

An \textit{effect-state space} is a triple $(\escript ,\sscript ,F)$ where $\escript$ and $\sscript$ are nonempty sets and $F$ is a mapping from $\escript\times\sscript$ into
$\sqbrac{0,1}\subseteq\real$ satisfying:
\begin{list}
{(E\arabic{cond})}{\usecounter{cond}
\setlength{\rightmargin}{\leftmargin}}
\item There exist elements $\theta ,u\in\escript$ such that $F(\theta ,s)=0$, $F(u,s)=1$ for every $s\in\sscript$.
\item If $F(a,s)\le F(b,s)$ for every $s\in\sscript$, then there exists a unique $c\in\escript$ such that $F(a,s)+F(c,s)=F(b,s)$ for every $s\in\sscript$.
\end{list}

The elements $\theta ,u$ in (ES1) correspond to the null effect that never occurs and the unit effect that always occurs, respectively. Condition~(ES2) postulates that if $a$ has a smaller probability of occurring than $b$ in every state, then there exists a unique effect $c$ which when combined with $a$ gives the probability that $b$ occurs in every state. In the sequel, we use the notation $s(a)=F(a,s)$. The following result was proved in \cite{gud99}.

\begin{thm}    
\label{thm21}
Let $(\escript ,\sscript ,F)$ be an effect-state space. If $s(a)+s(b)\le 1$ for every $s\in\sscript$, then there exists a unique $c\in\escript$ such that $s(c)=s(a)+s(b)$ for every $s\in\sscript$.
\end{thm}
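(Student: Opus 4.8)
The plan is to build the effect $c$ using (ES2) alone, which only manufactures \emph{differences}: given $F(x,s)\le F(y,s)$ for all $s$, it returns the unique effect playing the role of a subtraction $y\ominus x$. Since we instead need a \emph{sum}, the natural strategy is to pass through orthocomplements twice, converting $s(a)+s(b)$ into an iterated difference of the form $u\ominus\paren{(u\ominus a)\ominus b}$.

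First I would record the elementary inequalities that license each application of (ES2). Because $F$ takes values in $\sqbrac{0,1}$ we have $s(b)\ge 0$, so $s(a)\le s(a)+s(b)\le 1=s(u)$ for every $s\in\sscript$; thus $F(a,s)\le F(u,s)$ and (ES2) yields a unique $a'\in\escript$ with $s(a')=1-s(a)$ for all $s$. Next, the hypothesis $s(a)+s(b)\le 1$ rearranges to $s(b)\le 1-s(a)=s(a')$, so (ES2) applied to $b\le a'$ delivers a unique $d\in\escript$ with $s(d)=s(a')-s(b)=1-s(a)-s(b)$. Finally $s(a)+s(b)\ge 0$ gives $s(d)\le 1=s(u)$, and a third application of (ES2) to $d\le u$ produces $c\in\escript$ with $s(c)=1-s(d)=s(a)+s(b)$, which is the desired effect.

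For uniqueness I would prove the separation statement that $F(c,s)=F(c',s)$ for all $s$ forces $c=c'$; this is where I expect the only real subtlety to lie, since the axioms never assert outright that states separate effects. The trick is to feed the pair $(\theta ,c)$ into (ES2): as $F(\theta ,s)=0\le F(c,s)$, there is a \emph{unique} $w\in\escript$ with $F(\theta ,s)+F(w,s)=F(c,s)$, i.e.\ $s(w)=s(c)$ for every $s$. Both $c$ and $c'$ satisfy this equation, so the uniqueness clause of (ES2) identifies them. Applying this to any two solutions of $s(c)=s(a)+s(b)$ completes the argument. The bookkeeping of the three inequalities is routine; the one place to be careful is to invoke the uniqueness half of (ES2) correctly in this separation step rather than assuming separation of effects as given.
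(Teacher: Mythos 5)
Your proof is correct and complete. Note that the paper itself gives no proof of Theorem~\ref{thm21} at all --- it only cites \cite{gud99} --- so there is no in-paper argument to compare against; your construction is the standard one that such a proof must use: form the complement $a'$ via (ES2) applied to the pair $(a,u)$, subtract $b$ from $a'$ to get $d$ with $s(d)=1-s(a)-s(b)$, and complement $d$ to obtain $c$ with $s(c)=s(a)+s(b)$. You also correctly identified and handled the one genuinely non-routine point: the axioms nowhere assert that $\sscript$ separates effects, and your derivation of that separation property from the \emph{uniqueness} clause of (ES2) applied to the pair $(\theta ,c)$ --- both candidate effects satisfy $F(\theta ,s)+F(\cdot ,s)=F(c,s)$, hence coincide --- is exactly the right move, and is what makes the uniqueness half of the theorem legitimate rather than assumed.
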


If $s(a)+s(b)\le 1$ for every $s\in\sscript$, we write $a\perp b$ and the unique $c\in\escript$ in Theorem~\ref{thm21} is denoted $c=a+b$. We also write $a\le c$ and it is shown in \cite{bgp00,gud99,gpbb00} that $a\le c$ if and only if $s(a)\le s(c)$ for all $s\in\sscript$. When this condition holds, we say that $\sscript$ is
\textit{order-determining} on $\escript$. Moreover, it is shown in \cite{bgp00,gud99,gpbb00} that $(\escript ,\sscript ,F)$ is an effect-state space if and only if
$(\escript ,\theta ,u,+)$ is an effect algebra with order-determining set of states $\sscript$. We emphasize that a \textit{state} on $(\escript ,\theta ,u,+)$ is a function $s\colon\escript\to\sqbrac{0,1}$ satisfying $s(u)=1$ and $s(a+b)=s(a)+s(b)$ when $a\perp b$.

We now show that a natural convex structure can be defined on an effect-state space $(\escript ,\sscript ,F)$. We say that $(\escript ,\sscript ,F)$ is a
\textit{convex effect-state space} if for every $a\in\escript$, $\lambda\in\sqbrac{0,1}\subseteq\real$ there exists an element $\lambda a\in\escript$ such that
$s(\lambda a)=\lambda s(a)$ for every $s\in\sscript$. It follows that $\lambda a$ is unique and we interpret $\lambda a$ as the effect $a$ attenuated by the factor
$\lambda$. It is shown in \cite{gp98,gud99} that $(\escript ,\sscript ,F)$ is a convex effect-state space if and only if $(\escript ,\theta ,u,+)$ is a convex effect algebra with order-determining set of states $\sscript$.

We now introduce an equivalent structure that we find convenient to work with. This structure enables us to extend convexity to linearity. Let $V$ be a real vector space with zero $\theta$. A subset $K$ of $V$ is a \textit{positive cone} if $\real ^+K\le K$, $K+K\le K$ and $K\cap (-K)=\brac{\theta}$ For $x,y\in V$, we define $x\le y$ if $y-x\in K$. Then $\le$ is a partial order on $V$ and we call $(V,K)$ an \textit{ordered linear space} with positive cone $K$. We say that $K$ is \textit{generating} if $V=K-K$. Let
$u\in K$ with $u\ne\theta$ and form the interval
\begin{equation*}
\sqbrac{\theta ,u}=\brac{x\in K\colon x\le u}
\end{equation*}
For $x,y\in\sqbrac{\theta ,u}$ we write $x\perp y$ if $x+y\le u$ and in this case we have $x+y\in\sqbrac{\theta ,u}$. It is easy to check that $\sqbrac{\theta ,u}$ is a convex subset of $K$ in the sense that if $x,y\in\sqbrac{\theta ,u}$ and $\lambda\in\sqbrac{0,1}$, then $\lambda x+(1-\lambda )y\in\sqbrac{\theta ,u}$. It follows that 
$\paren{\sqbrac{\theta ,u},\theta ,u,+}$ is a convex effect algebra that we call an \textit{interval effect algebra} \cite{gp98,gud99}. We say that $\sqbrac{\theta ,u}$
\textit{generates} $K$ if $K=\real ^+\sqbrac{\theta ,u}$ and say that $\sqbrac{\theta ,u}$ \textit{generates} $V$ if $\sqbrac{\theta ,u}$ generates $K$ and $K$ generates $V$. The following result was proved in \cite{gp98,gud99}.

\begin{thm}    
\label{thm22}
$(\escript ,\sscript ,F)$ is a convex effect-state space if and only if $\paren{\sqbrac{\theta ,u},\theta ,u,+}$ is an interval effect algebra with an order-determining set of states $\sscript$ and $\sqbrac{\theta ,u}$ generates $V$.
\end{thm}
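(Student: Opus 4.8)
The plan is to reduce the statement to the representation of convex effect algebras inside ordered linear spaces, using the equivalence already recorded above that $(\escript ,\sscript ,F)$ is a convex effect-state space if and only if $(\escript ,\theta ,u,+)$ is a convex effect algebra with order-determining set of states $\sscript$. Granting this, what remains is to identify, for a fixed order-determining $\sscript$, convex effect algebras with interval effect algebras that generate an ordered linear space $V$. The reverse implication is nearly immediate and the forward implication carries the real content, so I would treat them separately. Throughout I would work with the evaluation embedding $a\mapsto\widehat a$, where $\widehat a\colon\sscript\to\sqbrac{0,1}$ is defined by $\widehat a(s)=s(a)$; this lands in $\real ^\sscript$ and sends $\theta ,u$ to the functions $0,1$.

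For the forward direction I would set $V=\mathrm{span}\brac{\widehat a\colon a\in\escript}\subseteq\real ^\sscript$ and $K=\real ^+\brac{\widehat a\colon a\in\escript}$, and check that $(V,K)$ is an ordered linear space. That $\real ^+K\subseteq K$ is clear; that $K+K\subseteq K$ is where convexity enters, since for $\alpha ,\beta\ge 0$ not both zero one has $\alpha\widehat a+\beta\widehat b=(\alpha +\beta )\widehat c$ with $c$ the convex combination $\tfrac{\alpha}{\alpha +\beta}a+\tfrac{\beta}{\alpha +\beta}b$, whose existence is guaranteed because the pointwise bound $s\paren{\tfrac{\alpha}{\alpha +\beta}a}+s\paren{\tfrac{\beta}{\alpha +\beta}b}\le 1$ forces orthogonality via Theorem~\ref{thm21}; and $K\cap (-K)=\brac{\theta}$ follows from $s(a)\ge 0$ for all $s$. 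Because $\sscript$ is order-determining, $a\mapsto\widehat a$ is injective and the induced order agrees with the effect-algebra order, so $\escript$ is carried isomorphically onto its image, and it remains to identify that image with the interval $\sqbrac{\theta ,u}=\brac{x\in K\colon x\le\widehat u}$.

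The main obstacle is this last identification, that is, surjectivity onto $\sqbrac{\theta ,u}$: one inclusion is easy, since each $\widehat a$ satisfies $\theta\le\widehat a\le\widehat u$ because $\widehat u-\widehat a=\widehat{u\ominus a}\in K$. For the converse, every element of $K$ has the single-ray form $\gamma\widehat c$ (using $K+K\subseteq K$ again), and if $\gamma\widehat c\le\widehat u$ then $\gamma s(c)\le 1$ for all $s$. When $\gamma\le 1$ this is just $\widehat{\gamma c}$; when $\gamma >1$ I would write $\gamma =m\lambda$ with $m\in\mathbb{N}$ and $\lambda\in (0,1]$, form $\lambda c\in\escript$, and observe that the $m$-fold sum $d=\lambda c+\cdots +\lambda c$ is defined because every partial sum has value $p\lambda s(c)\le\gamma s(c)\le 1$, giving $s(d)=\gamma s(c)$ and hence $\gamma\widehat c=\widehat d$. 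This finite ``scaling up,'' carried out entirely inside $\escript$, is the step I expect to be the most delicate. Once it is in place, $\sqbrac{\theta ,u}$ generates $V$ automatically: $K=\real ^+\sqbrac{\theta ,u}$ by construction and $V=K-K$ by splitting any finite combination into its positive and negative parts.

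For the reverse direction I would argue that an interval effect algebra $\sqbrac{\theta ,u}$ in an ordered linear space is already a convex effect algebra, the scalar multiple $\lambda a$ being the genuine vector scalar multiple, which lies in $\sqbrac{\theta ,u}$ for $\lambda\in\sqbrac{0,1}$. The only point to verify is that an additive state $s$ is automatically affine, i.e. $s(\lambda a)=\lambda s(a)$; I would establish this first for rational $\lambda\in\sqbrac{0,1}$ from additivity applied to the defined sums $p\paren{\tfrac1q a}\le a\le u$, and then extend to all $\lambda$ by squeezing between rationals using monotonicity of $s$. Given an order-determining $\sscript$, the cited equivalence then yields that $(\escript ,\sscript ,F)$ is a convex effect-state space, completing the proof.
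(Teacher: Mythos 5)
Your proof is correct, but be aware that the paper contains no proof of Theorem~\ref{thm22} to compare against: it defers entirely to \cite{gp98,gud99}. Relative to those sources your route is genuinely different. The representation theorem of \cite{gp98} builds the ordered linear space $(V,K)$ abstractly from the convex effect algebra alone (states play no role in that construction, and $\sscript$ enters only afterwards to set up the effect-state-space correspondence), whereas you exploit the order-determining states from the outset and embed $\escript$ concretely in $\real^{\sscript}$ via $a\mapsto\widehat a$. This buys a short, self-contained argument whose two real obligations you discharge correctly: closure $K+K\subseteq K$ via convex combinations and Theorem~\ref{thm21}, and surjectivity of the embedding onto $\brac{x\in K\colon x\le\widehat u}$ via the finite scaling-up induction, which is indeed the delicate step and is handled properly. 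The cost is generality: your construction requires an order-determining state set, which the abstract construction does not --- but that hypothesis is part of the theorem, so nothing is lost here. Your reverse direction is also sound, and isolating the homogeneity of additive states ($s(\lambda a)=\lambda s(a)$ by rationals plus a monotone squeeze) is exactly the point that must be checked before the recorded equivalence with convex effect algebras can be invoked. Two small points you should state explicitly, though both are immediate from what you have: the embedding carries the partial sum of $\escript$ to vector addition, so the image is the interval effect algebra and not merely an order-isomorphic poset; and $\sscript$, transported along this isomorphism, remains an order-determining set of states on $\sqbrac{\theta,\widehat u}$.
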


Because of this theorem, we can employ interval effect algebras as our basic mathematical structure. This will enable us to simplify our notation and use the power of linear spaces.

\section{Instruments on Interval Effect Algebras}  
In the sequel, we assume that $\escript =\sqbrac{\theta ,u}$ where $\paren{\sqbrac{\theta ,u},\theta ,u,+}$ is an interval effect algebra on $(V,K)$ with order-determining set of states $\sscript (\escript )$ and $\escript$ generates $V$. Without loss of generality, we can assume that $\sscript (\escript )$ is a convex set in the sense that
$\lambda\in\sqbrac{0,1}$, $s_1,s_2\in\sscript (\escript )$ imply $\lambda s_1+(1-\lambda )s_2\in\sscript (\escript )$. It is shown in \cite{bgp00,gpbb00}, that any
$s\in\sscript (\escript )$ extends to a unique linear functional from $V$ to $\real$. If $S$ and $T$ are convex sets, a function $f\colon S\to T$ is \textit{affine} if for
$\lambda _i\in\sqbrac{0,1}$ with $\sum\limits _{i=1}^n\lambda _i=1$, we have $f\paren{\sum\limits _{i-1}^n\lambda _ix_i}=\sum\lambda _if(x_i)$. We call $\escript$
\textit{unrestricted} if for every affine $f\colon\sscript (\escript )\to\sqbrac{0,1}$ there exists an effect $a\in\escript$ such that $f(s)=s(a)$ for all $s\in\sscript (\escript )$. Just as order-determining specifies there are enough states to determine the order of effects, unrestricted specifies there are enough effects to determine the affine functions on states. We shall assume in this section that $\escript$ is unrestricted.

A function $f\colon\escript\to\sqbrac{0,1}$ is a \textit{substate} if $f(a)=\lambda s(a)$ for all $a\in\escript$ where $\lambda\in\sqbrac{0,1}$ and $s\in\sscript (\escript )$. We denote the set of substates by $\rmsub\paren{\sscript (\escript )}$. Let $\escript _1=\sqbrac{\theta _1,u_1}$, $\escript _2=\sqbrac{\theta _2,u_2}$ be interval effect algebras satisfying the requirements of the previous paragraph. An affine map $\iscript\colon\sscript (\escript _1)\to\rmsub\paren{\sscript (\escript )}$ is called an
\textit{operation from $\escript _1$ to} $\escript _2$ and an operation $\iscript\colon\sscript (\escript _1)\to\sscript (\escript _2)$ is called a \textit{channel}
\cite{bgl95,blpy16,dl70,kra83}. Thus, an operation $\iscript$ is a channel if and only if $\iscript (s)(u_2)=1$ for all $s\in\sscript (\escript _1)$. This definition is more general than for the usual Hilbert space effect algebra $\escript (H)$ because in $\escript (H)$, $\iscript$ is required to be completely positive \cite{hz12,nc00} and this concept is not applicable for a general $\escript$. We denote the set of operations from $\escript _1$ to $\escript _2$ by $\oscript (\escript _1,\escript _2)$ and we use the notation
$\oscript (\escript )=\oscript (\escript ,\escript )$. The map $\iscript ^*\colon\escript _2\to\escript _1$ in the next lemma is called the \textit{dual} of
$\iscript\in\oscript (\escript _1,\escript _2)$ \cite{gud120,gud220,gud22,gud24}.

\begin{lem}    
\label{lem31}
{\rm{(i)}}\enspace If $\iscript\in\oscript (\escript _1,\escript _2)$, there exists a unique affine map $\iscript ^*\colon\escript _2\to\escript _1$ that satisfies
$s\sqbrac{\iscript ^*(a)}=\iscript (s)(a)$ for all $s\in\sscript (\escript _1)$, $a\in\escript _2$.
{\rm{(ii)}}\enspace $\iscript$ is a channel if and only if $\iscript ^*(u_2)=u_1$.
\end{lem}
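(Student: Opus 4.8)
The plan is to construct $\iscript^*$ one effect at a time, using the unrestricted hypothesis on $\escript_1$ to produce the required effect and the order-determining hypothesis to pin it down uniquely. For part (i), I would first fix $a\in\escript_2$ and define $f_a\colon\sscript(\escript_1)\to\sqbrac{0,1}$ by $f_a(s)=\iscript(s)(a)$. I must check two things. That $f_a$ lands in $\sqbrac{0,1}$ follows because $\iscript(s)\in\rmsub\paren{\sscript(\escript_2)}$ has the form $\lambda t$ with $t\in\sscript(\escript_2)$, $\lambda\in\sqbrac{0,1}$, and $a\in\escript_2$ forces $t(a)\in\sqbrac{0,1}$, so $f_a(s)=\lambda t(a)\in\sqbrac{0,1}$. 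That $f_a$ is affine in $s$ follows from the affineness of $\iscript$ together with the fact that evaluation at $a$ is linear on convex combinations of substates (these combinations are taken pointwise). The unrestricted assumption then supplies an effect, which I name $\iscript^*(a)\in\escript_1$, satisfying $s\sqbrac{\iscript^*(a)}=f_a(s)=\iscript(s)(a)$ for every $s$; since $\sscript(\escript_1)$ is order-determining, this effect is unique, so $\iscript^*$ is a well-defined map from $\escript_2$ to $\escript_1$.

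Next I would verify that $\iscript^*$ is affine. For a convex combination $a=\sum_i\mu_i a_i$ in $\escript_2$, I would again write $\iscript(s)=\lambda t$ and use that the state $t$ extends to a linear functional on $V$, so $\iscript(s)(a)=\lambda t\paren{\sum_i\mu_i a_i}=\sum_i\mu_i\,\iscript(s)(a_i)$. Comparing with the defining relation gives $s\sqbrac{\iscript^*(a)}=\sum_i\mu_i\,s\sqbrac{\iscript^*(a_i)}=s\sqbrac{\sum_i\mu_i\iscript^*(a_i)}$ for every $s$, whence order-determining forces $\iscript^*(a)=\sum_i\mu_i\iscript^*(a_i)$. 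Uniqueness of the entire map is the same appeal to order-determining: any affine $g\colon\escript_2\to\escript_1$ with $s\sqbrac{g(a)}=\iscript(s)(a)$ satisfies $s\sqbrac{g(a)}=s\sqbrac{\iscript^*(a)}$ for all $s$, hence $g=\iscript^*$.

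For part (ii), I would invoke the characterization stated just before the lemma, namely that $\iscript$ is a channel exactly when $\iscript(s)(u_2)=1$ for all $s\in\sscript(\escript_1)$, together with the identity $s\sqbrac{\iscript^*(u_2)}=\iscript(s)(u_2)$ from part (i). If $\iscript$ is a channel, then $s\sqbrac{\iscript^*(u_2)}=1=s(u_1)$ for all $s$, so $\iscript^*(u_2)=u_1$ by order-determining; conversely, if $\iscript^*(u_2)=u_1$, then $\iscript(s)(u_2)=s(u_1)=1$ for all $s$, so $\iscript$ is a channel.

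The hard part will be the affineness bookkeeping in part (i): both the affineness of $f_a$ in the state and the affineness of $\iscript^*$ in the effect. The crux in each case is to exploit the substate structure correctly, writing $\iscript(s)=\lambda t$ and passing convex combinations through the evaluation via the linear extension of the state $t$ to $V$. Once that is done, existence is a direct application of the unrestricted hypothesis and every uniqueness claim, as well as all of part (ii), reduces to the order-determining property of $\sscript(\escript_1)$.
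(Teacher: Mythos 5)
Your proposal follows essentially the same route as the paper's own proof: define $f_a(s)=\iscript(s)(a)$, invoke the unrestricted hypothesis for existence of $\iscript^*(a)$, use the order-determining property of $\sscript(\escript_1)$ for uniqueness and for the affineness of $\iscript^*$, and reduce part (ii) to the identity $s\sqbrac{\iscript^*(u_2)}=\iscript(s)(u_2)$. The only difference is that you spell out the bookkeeping (that $f_a$ lands in $\sqbrac{0,1}$ and is affine, via the substate form $\lambda t$ and the linear extension of states to $V$) which the paper asserts without comment; this is a correct and slightly more careful rendering of the same argument.
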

\begin{proof}
(i)\enspace For $a\in\escript _2$ define the affine map $f_a\colon\sscript (\escript _1)\to\sqbrac{0,1}$ by $f_a(s)=\iscript (s)(a)$. Since $\escript _1$ is unrestricted, there exists $b\in\escript _1$ such that
\begin{equation*}
\iscript (s)(a)=f_a(s)=s(b)
\end{equation*}
for all $s\in\sscript (\escript _1)$. Now $b$ is unique because if $s(b_1)=s(b)$ for all $s\in\sscript (\escript _1)$, then since $\sscript (\escript _1)$ is order-determining,
$b_1=b$. Define the map $\iscript ^*\colon\escript _2\to\escript _1$ by $\iscript ^*(a)=b$. Then $\iscript ^*$ is affine because if $\lambda _i\in\sqbrac{0,1}$,
$\sum\limits _{i=1}^n\lambda _i=1$, we obtain for $a_i\in\escript _2$, $i=1,2,\ldots ,n$ that 
\begin{align*}
s\sqbrac{\iscript ^*\paren{\sum _{i=1}^n\lambda _ia_i}}&=\iscript (s)\paren{\sum _{i=1}^n\lambda _ia_i}=\sum _{i=1}^n\lambda _i\iscript (s)(a)
   =\sum _{i=1}^n\lambda _is\sqbrac{\iscript ^*(a_i)}\\
   &=s\sqbrac{\sum _{i=1}^n\lambda _i\iscript ^*(a_i)}
\end{align*}
for all $s\in\sscript (\escript _1)$. Since $\sscript (\escript _1)$ is order-determining,
$\iscript ^*\paren{\sum\limits _{i=1}^n\lambda _ia_i}=\sum\limits _{i=1}^n\lambda _i\iscript ^*(a_i)$. Finally, $\iscript ^*$ is unique because
$s\sqbrac{\iscript ^*(a)}=\iscript (s)(a)$ for all $s\in\sscript (\escript _1)$ and $\sscript (\escript _	1)$ is order-determining.
(ii)\enspace If $\iscript$ is a channel, then
\begin{equation*}
s\sqbrac{\iscript ^*(u_2)}=\iscript (s)(u_1)=1
\end{equation*}
for all $s\in\sscript (\escript _1)$. Therefore, $\iscript ^*(u_2)=u_1$. Conversely, if $\iscript ^*(u_2)=u_1$, then 
\begin{equation*}
\iscript (s)(u_2)=s\paren{\iscript ^*(u_2)}=1
\end{equation*}
for all $s\in]\sscript (\escript _1)$ so $\iscript (s)\in\sscript (\escript _2)$. Hence, $\iscript$ is a channel.
\end{proof}

We say that $\iscript\in\oscript (\escript _1,\escript _2)$ \textit{measures} the effect $\iscripthat =\iscript ^*(u_2)$. Notice that $\iscripthat\in\escript _1$ is the unique effect satisfying $s(\iscripthat\,)=\iscript (s)(u_2)$ for all $s\in\sscript (\escript _1)$ where $s(\iscripthat\,)$ is the probability that $\iscripthat$ occurs in the state $s$. Although an operation measures a unique effect, we shall see that an effect is measured by many operations. If $\iscript$ is employed to make a measurement and $\iscripthat$ occurs in a state $s$ where $s(\iscripthat\,)\ne 0$, then the \textit{updated} state is defined by $s'=\iscript (s)/s(\iscripthat\,)\in\sscript (\escript _2)$. We see that $s'$ is a state because
\begin{equation*}
s'(u_2)=\tfrac{1}{s(\iscripthat\,)}\iscript (s)(u_2)=1
\end{equation*}
If $\iscript$ measures $a\in\escript _1$ and $b\in\escript _2$, we define the \textit{sequential product of $a$ then $b$ relative to} $\iscript$ to be
$a\sqbrac{\iscript}b=\iscript ^*(b)$ \cite{gg02,gn01,gud22}. Since $a$ is measured first, the $a$ measurement can interfere with the $b$ measurement but not vice versa. For this reason $a\sqbrac{\iscript}b$ has regular properties as a function of $b$ but not as a function of $a$, in general. The physical meaning of $a\sqbrac{\iscript}b$ is the following. The system $S$ is initially in a state $s\in\sscript (\escript _1)$ and the effect $a\in\escript _1$ is measured using the operation
$\iscript\in\oscript (\escript _1,\escript _2)$. The state is updated to the state $s'\in\sscript (\escript _2)$ and $b\in\escript _2$ is measured which results in the effect
$\iscript _1^*(b)\in\escript _1$. The probability that $\iscript ^*(b)$ occurs is 
\begin{equation*}
s\sqbrac{\iscript ^*(b)}=\iscript (s)(b)=s(\iscripthat\,)\,\tfrac{\iscript (s)}{s(\iscripthat\,)}(b)=s(\iscripthat\,)s'(a)s'(b)
\end{equation*}
Thus, the probability that $a\sqbrac{\iscript}b=\iscript ^*(b)$ occurs in the state $s$ is the probability that $a$ occurs in the state $s$ times the probability that $b$ occurs in the updated state $s'$.

\begin{example}  
The simplest example of an operation $\iscript\in\oscript (\escript _1,\escript _2)$ is a \textit{constant channel} $\iscript (s)=s_2\in\sscript (\escript _2)$ for all
$s\in\sscript (\escript _1)$. Then $\iscript ^*\colon\escript _2\to\escript _1$ satisfies
\begin{equation*}
s\sqbrac{\iscript ^*(a)}=\iscript (s)(a)=s_2(a)=s\sqbrac{s_2(a)u_1}
\end{equation*}
for every $s\in\sscript (\escript _1)$. Hence, $\iscript ^*(a)=s_2(a)u_1$ for all $a\in\escript _2$. The effect measured by $\iscript$ is $\iscripthat =\iscript ^*(u_2)=u_1$. Since there are many constant channels corresponding to the various $s_2\in\sscript (\escript _2)$, $u_1$ is measured by many operations. Although
$u_1\sqbrac{\iscript}u_2=\iscript ^*(u_2)=u_1$ we have
\begin{equation*}
u_1\sqbrac{\iscript }b=\iscript ^*(b)=s_2(b)u_1
\end{equation*}
and this is not $b$, in general, as one might expect.\hfill\qedsymbol
\end{example}

A (finite) \textit{observable} on $\escript =\sqbrac{\theta ,u}$ is a finite set $A=\brac{A_x\colon x\in\Omega _A}\subseteq\escript$ such that
$\sum\limits _{x\in\Omega _A}A_x=u$. We denote the set of observables on $\escript$ by $\rmob (\escript )$. If $A\in\rmob (\escript )$, we call $\Omega _A$ the
\textit{outcome space} for $A$ and we call $A(\Delta )=\sum\limits _{x\in\Delta}A_x$, where $\Delta\subseteq\Omega _A$, an \textit{effect-valued measure}. For
$s\in\sscript (\escript )$, $A\in\rmob (\escript )$, we call $s(A_x)$ the $s$-\textit{probability that $A$ has outcome} $x$ and $\Phi _s^A(s)=s\sqbrac{A(\Delta )}$,
$\Delta\subseteq\Omega _A$ is the $s$-\textit{distribution} of $A$. Two effects $a,b\in\escript$ \textit{coexist} if there exists an observable $A\in\rmob (\escript )$,
$\Delta _1,\Delta _2\subseteq\Omega _A$ such that $a=A(\Delta _1)$, $b=A(\Delta _2)$. In this way $a,b$ are simultaneously measured by $A$. A bi-observable is an observable $C$ that satisfies $\Omega _C=\Omega _1\times\Omega _2$ and we write
\begin{equation*}
C=\brac{C_{xy}\colon x\in\Omega _1, y\in\Omega _2}=\brac{C_{xy}\colon (x,y)\in\Omega _1\times\Omega _2}
\end{equation*}
Two observables $A=\brac{A_x\colon x\in\Omega _A}$, $B=\brac{B_y\colon y\in\Omega _B}$ in $\rmob (\escript )$ \textit{coexist} if there exists a
\textit{joint bi-observable} \cite{gud120,gud220,gud24}
\begin{equation*}
C=\brac{C_{xy}\colon (x,y)\in\Omega _A\times\Omega _B}\in\rmob (\escript )
\end{equation*}
such that $\sum\limits _{y\in\Omega _B}C_{xy}=A_x$ for all $x\in\Omega _A$ and $\sum\limits _{x\in\Omega _A}C_{xy}=B_y$ for all $y\in\Omega _B$. If $A,B$ coexist, then $A_x,B_y\in\escript$ coexist for all $x\in\Omega _A$, $y\in\Omega _B$ because $A_x=C\paren{\brac{x}\times\Omega _B}$ and
$B_y=\paren{\Omega _A\times\brac{y}}$. If $C$ is a bi-observable with $\Omega _C=\Omega _1\times\Omega _2$ we call the observables $C^1,C^2$ given by
$C_x^1=\sum\limits _{y\in\Omega _2}C_{xy}$, $C_y^2=\sum\limits _{x\in\Omega _1}C_{xy}$, the \textit{marginals} of $C$. It follows that $C^1,C^2$ coexist.

A (finite) \textit{instrument} from $\escript _1$ to $\escript _2$ is a finite set $\iscript =\brac{\iscript _x\colon x\in\Omega _\iscript}\subseteq\oscript (\escript _1,\escript _2)$ such that $\iscriptbar =\sum\limits _{x\in\Omega _\iscript}\iscript _x$ is a channel. Since $\iscriptbar$ is a channel, we have for all $s\in\sscript (\escript _1)$ that
\begin{equation*}
s\sqbrac{\,\iscriptbar\,^*(u_2)}=\iscriptbar (s)(u_2)=1
\end{equation*}
so $\iscriptbar\,^*(u_2)=u_1$. We call $\Omega _\iscript$ the \textit{outcome space} of $\iscript$ and think of $\iscript _x$ as the operation that occurs when a measurement of $\iscript$ results in outcome $x$. We denote the set of instruments from $\escript _1$ to $\escript _2$ by $\rmin (\escript _1,\escript _2)$. We call
$\iscript (\Delta )=\sum\limits _{x\in\Delta}\iscript _x$, $\Delta\subseteq\Omega _\iscript$, an \textit{operation-valued measure}. If $s\in\sscript (\escript _1)$, then
$\iscript _x(s)(u_2)$ is the probability that $\iscript _x$ occurs in the state $s$ and
\begin{equation*}
\Phi _s^\iscript (\Delta )=\sum _{x\in\Delta}\iscript _x(s)(u_2)
\end{equation*}
is the $s$-\textit{distribution} of $\iscript$. Notice that $\Phi _s^\iscript$ is a probability distribution because
\begin{equation*}
\Phi _s^\iscript (\Omega _\iscript )=\sum _{x\in\Omega _\iscript}\iscript _x(s)(u_2)=\iscripthat (s)(u_2)=1
\end{equation*}
Letting $\iscripthat _x=\iscript _x^*(u_2)\in\escript _1$ we have that
\begin{equation*}
s(\iscripthat _x)=s\sqbrac{\iscript _x^*(u_2)}=\iscript _x(s)(u_2)
\end{equation*}
for all $s\in\sscript (\escript _1)$. Since $\iscriptbar$ is a channel, we obtain
\begin{equation*}
s\paren{\sum _{x\in\Omega _\iscript}\iscripthat _x}=\sum _{x\in\Omega _\iscript}\iscript _x(s)(u_2)=\iscriptbar (s)(u_2)=1
\end{equation*}
for all $s\in\sscript (\escript _1)$. It follows that $\sum\limits _{x\in\Omega _\iscript}\iscripthat _x=u_1$ so
\begin{equation*}
\iscripthat =\brac{\iscripthat _x\colon x\in\Omega _{\iscripthat}=\Omega _\iscript}
\end{equation*}
is an observable on $\escript _1$. Then $\iscripthat$ is the unique observable on $\escript _1$ that satisfies $s(\iscripthat _x)=\iscript _x(s)(u_2)$ for all
$s\in\sscript (\escript _1)$ and we call $\iscripthat$ the observable \textit{measured} by $\iscript$. As with effects, although an instrument measures a unique observable, as we shall see, any observable is measured by many instruments and the set of instruments from $\escript _1$ to $\escript _2$ is denoted $\rmin (\escript _1,\escript _2)$
\cite{gud220,gud22,gud23,gud24}. If $\iscript\in\rmin (\escript _1,\escript _2)$, $\jscript\in\rmin (\escript _2,\escript _3)$ we define their
\textit{sequential product} $\iscript\circ\jscript\in\rmin (\escript _1,\escript _3)$ by $(\iscript\circ\jscript )_{xy}(s)=\jscript _y\paren{\iscript _s(x)}$ for all
$s\in\sscript (\escript _1)$. Then
\begin{equation*}
(\iscript\circ\jscript )_{xy}^*(b)=\iscript _x^*\paren{\jscript _y^*(b)}
\end{equation*}
for all $b\in\escript _3$ so $(\iscript\circ\jscript )^*=(\jscript\circ\iscript )^*$. We also have
\begin{equation*}
(\iscript\circ\jscript )_{xy}^\wedge =(\iscript _x\circ\jscript _y)^*(u_3)=\iscript _x^*\paren{\jscript _y^*(u_3)}=\iscript _x^*(\jscripthat _y)
\end{equation*}

\begin{example}  
The most important example of an interval effect algebra is a Hilbert space effect algebra $\escript (H)$ which is the basic mathematical structure for traditional quantum theory \cite{hz12,nc00}. For simplicity, we assume that the Hilbert space $H$ is finite dimensional. In this case, the real linear space $V$ is the set of self-adjoint operators
$\lscript _S(H)$ on $H$ and the positive cone $K$ is the set of positive operators 
\begin{equation*}
K=\brac{T\in V\colon T\ge 0}\subseteq V
\end{equation*}
where $0$ is the zero operator. Letting $I$ be the identity operator we have the set of effects $\escript (H)=\brac{T\in V\colon 0\le T\le I}$, A \textit{density operator} is an operator $\rho\in K$ with trace $\trace (\rho )=1$. It is well known that if $\dim H\ge 3$ then $s\in\sscript\paren{\escript (H)}$ if and only if there exists a density operator
$\rho$ with $s$ given by the Born rule $s(a)=\trace (\rho a)$ for all $a\in\escript (H)$ \cite{hz12,nc00}. It is also known that $\sscript (\escript )$ is order-determining and unrestricted \cite{hz12}. An observable
\begin{equation*}
A=\brac{A_x\colon x\in\Omega _A}\subseteq\escript (H)
\end{equation*}
with $\sum\limits _xA_x=I$ is called a \textit{normalized, positive-operator valued measure} (POVM). A \textit{Hilbert space operation} on $\escript (H)$ is an affine,
trace-nonincreasing, completely positive map $\iscript\colon\sscript\paren{\escript (H)}\to\rmsub\paren{\sscript\paren{\escript (H)}}$ \cite{hz12}. A trace-preserving operation is a channel. Every operation on $\escript (H)$ has the form $\iscript (\rho )=\sum K_i\rho K_i^*$, where $K_i$ are linear operators on $H$ with $\sum K_i^*K_i\le I$
\cite{hz12} and $\iscript$ is a channel if and only if $\sum K_i^*K_i=I$. An important example is a \textit{L\"uders channel} which has the form
$\iscript (\rho )=\sum A_i^{1/2}\rho A_i^{1/2}$ where $A=\brac{A_i \colon i=1,2,\ldots ,n}$ is an observable. An \textit{instrument} on $\escript (H)$ is a set
\begin{equation*}
\iscript =\brac{\iscript _x\colon x\in\Omega _\iscript}\subseteq\oscript\paren{\escript (H)}
\end{equation*}
such that $\sum\iscript _x$ is a channel \cite{dl70,kra83}. The Holevo operations and instruments discussed in Section~4 apply to this example as well
\cite{hol82,hol98}.\hfill\qedsymbol
\end{example}

As with observables, a \textit{bi-instrument} is an instrument $\kscript$ that satisfies $\Omega _\kscript =\Omega _1\times\Omega _2$ and we write
$\kscript =\brac{\kscript _{xy}\colon (x,y)\in\Omega _1\times\Omega _2}$. Two instruments $\iscript ,\jscript\in\rmin (\escript _1,\escript _2)$ \textit{coexist} if there exists a
\textit{joint bi-instrument} $\kscript\in\rmin (\escript _1,\escript _2)$ such that $\sum\limits _{y\in\Omega _2}\kscript _{xy}=\iscript _x$ and
$\sum\limits _{x\in\Omega _1}\kscript _{xy}=\jscript _y$ \cite{gud220,gud22,gud23,gud24}.

\begin{lem}    
\label{lem32}
If $\iscript ,\jscript\in\rmin (\escript _1,\escript _2)$ coexist, then $\iscripthat$, $\jscripthat$ coexist.
\end{lem}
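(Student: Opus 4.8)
The plan is to use the observable $\kscripthat$ measured by the joint bi-instrument as the required joint bi-observable for $\iscripthat$ and $\jscripthat$. Since $\iscript$ and $\jscript$ coexist, there is a bi-instrument $\kscript=\brac{\kscript _{xy}\colon (x,y)\in\Omega _1\times\Omega _2}\in\rmin (\escript _1,\escript _2)$ with $\sum _y\kscript _{xy}=\iscript _x$ for all $x\in\Omega _1$ and $\sum _x\kscript _{xy}=\jscript _y$ for all $y\in\Omega _2$. Because $\kscript$ is an instrument, the discussion preceding this lemma shows that it measures a unique observable $\kscripthat\in\rmob (\escript _1)$ with components $\kscripthat _{xy}=\kscript _{xy}^*(u_2)$ and $\sum _{(x,y)}\kscripthat _{xy}=u_1$. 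As $\Omega _{\kscripthat}=\Omega _1\times\Omega _2$, this $\kscripthat$ is a bi-observable, so it only remains to identify its two marginals.

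The key step is an additivity property of the dual map: when a finite sum of operations is again an operation, the dual of the sum is the sum of the duals. Fix $x\in\Omega _1$ and let $a\in\escript _2$. Since the sum of operations is taken pointwise on states, the bi-instrument condition gives $\iscript _x(s)(a)=\sum _y\kscript _{xy}(s)(a)$ for every $s\in\sscript (\escript _1)$. Applying Lemma~\ref{lem31} termwise, the right-hand side equals $\sum _ys\sqbrac{\kscript _{xy}^*(a)}=s\sqbrac{\sum _y\kscript _{xy}^*(a)}$, while the left-hand side equals $s\sqbrac{\iscript _x^*(a)}$, where $\iscript _x^*$ is defined by Lemma~\ref{lem31} because $\iscript _x=\sum _y\kscript _{xy}$ is itself an operation. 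Since $\sscript (\escript _1)$ is order-determining, $\iscript _x^*(a)=\sum _y\kscript _{xy}^*(a)$. Setting $a=u_2$ yields the first marginal, $\sum _y\kscripthat _{xy}=\sum _y\kscript _{xy}^*(u_2)=\iscript _x^*(u_2)=\iscripthat _x$; the same computation with the roles of the two indices exchanged gives $\sum _x\kscripthat _{xy}=\jscript _y^*(u_2)=\jscripthat _y$. Hence $\kscripthat$ is a joint bi-observable whose marginals are $\iscripthat$ and $\jscripthat$, so $\iscripthat$ and $\jscripthat$ coexist.

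I expect the only real, and rather mild, obstacle to be the dual-additivity step: one must first observe that $\iscript _x=\sum _y\kscript _{xy}$ is a genuine operation so that $\iscript _x^*$ is well defined, and then pass from the equality of the numbers $s\sqbrac{\iscript _x^*(a)}$ and $s\sqbrac{\sum _y\kscript _{xy}^*(a)}$ for all $s\in\sscript (\escript _1)$ to an equality of effects in $\escript _1$ via the order-determining hypothesis. Everything else is bookkeeping with the product outcome space $\Omega _1\times\Omega _2$.
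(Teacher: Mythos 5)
Your proposal is correct and follows essentially the same route as the paper: take the observable $\kscripthat$ measured by the joint bi-instrument $\kscript$ as the joint bi-observable, compute its marginals by evaluating on states via $s(\kscripthat _{xy})=\kscript _{xy}(s)(u_2)$ together with $\sum _y\kscript _{xy}=\iscript _x$ (resp.\ $\sum _x\kscript _{xy}=\jscript _y$), and conclude equality of effects from the order-determining property of $\sscript (\escript _1)$. Your intermediate dual-additivity identity $\iscript _x^*(a)=\sum _y\kscript _{xy}^*(a)$ for general $a\in\escript _2$ is just a slightly more general statement of the same computation, which the paper carries out directly at $a=u_2$.
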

\begin{proof}
Let $\kscript\in\rmin (\escript _1,\escript _2)$ be a joint instrument for $\iscript$, $\jscript$. Then
\begin{equation*}
\kscripthat =\brac{\kscripthat _{xy}\colon (x,y)\in\Omega _1\times\Omega _2}
\end{equation*}
is a bi-observable in $\rmob (\escript _1)$ and for all $s\in\sscript (\escript _1)$, $x\in\Omega _1$ we have
\begin{align*}
s\paren{\sum _{y\in\Omega _2}\kscripthat _{xy}}&=\sum _{y\in\Omega _2}\sqbrac{s(\kscripthat _{xy})}=\sum _{y\in\Omega _2}\sqbrac{\kscript _{xy}(s)(u_2)}
   =\sqbrac{\sum _{y\in\Omega _2}\kscript _{xy}(s)}(u_2)\\
   &=\iscript _x(s)(u_2)=s(\iscripthat _x)
\end{align*}
and similarly, $s\paren{\sum\limits _{x\in\Omega _1}\kscripthat _{xy}}=s(\jscripthat _y)$. Therefore, $\sum\limits _{y\in\Omega _2}\kscripthat _{xy}=\iscripthat _x$ and\newline
$\sum\limits _{x\in\Omega _1}\kscripthat _{xy}=\jscripthat _y$ so $\kscripthat$ is a joint observable for $\iscripthat ,\jscripthat$. Hence $\iscripthat ,\jscripthat$ coexist
\end{proof}

As we shall see, simple examples show that the converse of Lemma~\ref{lem32} does not hold.

Let $A\in\rmob (\escript _1)$, $B\in\rmob (\escript _2)$ and suppose $\iscript\in\rmin (\escript _1,\escript _2)$ satisfies $\iscripthat =A$. The 
\textit{sequential product of $A$ then $B$ relative to} $\iscript$ is the bi-observable on $\escript _1$ given by $\paren{A\sqbrac{\iscript}B}_{xy}=\iscript _x^*(B_y)$. Notice that $A\sqbrac{\iscript}B$ is indeed a bi-observable because for $s\in\sscript (\escript _1)$ we have
\begin{align*}
s\sqbrac{\sum _{x,y}\paren{A\sqbrac{\iscript}B}_{xy}}&=s\sqbrac{\sum _{x,y}\iscript _x^*(B_y)}=s\sqbrac{\sum _x\iscript _x^*(u_2)}=s\sqbrac{\,\iscriptbar\,^*(u_2)}\\
   &=\iscriptbar (s)(u_2)=1
\end{align*}
so $\sum\limits _{x,y}\paren{A\sqbrac{\iscript}B}_{xy}=u_1$. We define $B$ \textit{conditioned by $A$ relative to} $\iscript$ as
\begin{equation*}
\paren{B\ab{\iscript}A}_y=\sum _x\paren{A\sqbrac{\iscript}B}_{xy}=\sum _x\iscript _x^*(B_y)=\iscriptbar\,^*(B_y)
\end{equation*}
for all $y\in\Omega _B$, Again, $\paren{B\ab{\iscript}A}$ is an observable because
\begin{equation*}
\sum _y\paren{B\ab{\iscript}A}_y=\iscriptbar\,^*\paren{\sum _yB_y}=\iscriptbar\,^*(u_2)=u_1
\end{equation*}
Since $\sum\limits _x\paren{A\sqbrac{\iscript}B}_{xy}=\paren{B\ab{\iscript}A}_y$ and
\begin{equation*}
\sum _y\paren{A\sqbrac{\iscript}B}_{xy}=\sum _y\iscript _x^*(B_y)=\iscript _x^*\paren{\sum _yB_y}=\iscript _x^*(u_2)=\iscripthat _x=A_x
\end{equation*}
we have that $A$ and $\paren{B\ab{\iscript}A}$ coexist with joint observable $A\sqbrac{\iscript}B$ \cite{gud220,gud22,gud23,gud24}.

If $a\in\escript =\sqbrac{\theta ,u}$, $\iscript\in\oscript (\escript )$ with $\iscripthat =a$, then $a$ is $\iscript$-\textit{repeatable} \cite{hz12} if $\iscript ^*(a)=a$. Thus, $a$ is
$\iscript$-repeatable if and only if
\begin{equation*}
\iscript ^*(u)=\iscript ^*(a)=a
\end{equation*}
The \textit{complement} of $a\in\escript$ is the effect $a'=u-a$. Then $a'$ is the unique effect satisfying $s(a')=1-s(a)$ for all $s\in\sscript (\escript )$.

\begin{thm}    
\label{thm33}
The following statements are equivalent
{\rm{(i)}}\enspace $a$ is $\iscript$-repeatable.
{\rm{(ii)}}\enspace $a\sqbrac{\iscript}a=a$,
{\rm{(iii)}}\enspace $a\sqbrac{\iscript}a'=\theta$,
{\rm{(iv)}}\enspace $a\sqbrac{\iscript}b=\theta$ whenever $a\perp b$,
{\rm{(v)}}\enspace $\iscript^*(b)\le\iscript ^*(a)$ for all $b\in\escript$,
{\rm{(vi)}}\enspace $\iscript\paren{\iscript (s)}(u)=\iscript (s)(u)$ for all $s\in\sscript (\escript )$ and $\iscript ^*(u)=a$.
\end{thm}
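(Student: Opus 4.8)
The plan is to show that each of the six statements is equivalent to the single identity $\iscript^*(a)=a$, which is exactly (i). The equivalence (i)$\Leftrightarrow$(ii) is immediate, since by definition $a\sqbrac{\iscript}a=\iscript^*(a)$, so $a\sqbrac{\iscript}a=a$ asserts precisely $\iscript^*(a)=a$. Before treating the remaining conditions I would record two structural properties of the dual map. First, $\iscript^*$ is additive on orthogonal pairs: if $a_1\perp a_2$ then, since the substate $\iscript(s)$ extends to a linear functional on $V$, we have $s\sqbrac{\iscript^*(a_1+a_2)}=\iscript(s)(a_1+a_2)=\iscript(s)(a_1)+\iscript(s)(a_2)=s\sqbrac{\iscript^*(a_1)+\iscript^*(a_2)}$ for every $s\in\sscript(\escript)$, and order-determinacy forces $\iscript^*(a_1+a_2)=\iscript^*(a_1)+\iscript^*(a_2)$. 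Second, $\iscript^*$ is monotone: if $b\le c$ then $c-b\in\escript$ with $b\perp(c-b)$, whence $\iscript^*(c)=\iscript^*(b)+\iscript^*(c-b)\ge\iscript^*(b)$.

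For (i)$\Leftrightarrow$(iii) I would apply additivity to $u=a+a'$ together with the standing hypothesis $\iscript^*(u)=\iscripthat=a$ to get $a=\iscript^*(a)+\iscript^*(a')$. Reading this identity in both directions (using cancellation in $V$) shows at once that $\iscript^*(a)=a$ holds if and only if $\iscript^*(a')=\theta$, that is, if and only if $a\sqbrac{\iscript}a'=\theta$, which is (iii). The equivalence (iii)$\Leftrightarrow$(iv) then follows from monotonicity: one direction is the special case $b=a'$ (note $a\perp a'$), and for the other, $a\perp b$ gives $b\le a'$, so $\theta\le\iscript^*(b)\le\iscript^*(a')=\theta$.

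For (i)$\Leftrightarrow$(v) I would again use $\iscript^*(u)=a$. If $\iscript^*(a)=a$, then for any $b\in\escript$ we have $b\le u$, so monotonicity yields $\iscript^*(b)\le\iscript^*(u)=a=\iscript^*(a)$, which is (v). Conversely, taking $b=u$ in (v) gives $\iscript^*(u)\le\iscript^*(a)$, while $a\le u$ gives $\iscript^*(a)\le\iscript^*(u)$; hence $\iscript^*(a)=\iscript^*(u)=a$.

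The last equivalence (i)$\Leftrightarrow$(vi) is where the only real subtlety lies, since it requires evaluating $\iscript$ at the substate $\iscript(s)$ rather than at a state. I would first extend $\iscript$, and with it the dual relation, from states to substates by positive homogeneity, so that $t\sqbrac{\iscript^*(c)}=\iscript(t)(c)$ for every substate $t$. Applying this with $t=\iscript(s)$ and $c=u$, and using $\iscript^*(u)=a$, gives $\iscript\paren{\iscript(s)}(u)=\iscript(s)(a)=s\sqbrac{\iscript^*(a)}$, while $\iscript(s)(u)=s(a)$. Thus the dynamical clause of (vi) reads $s\sqbrac{\iscript^*(a)}=s(a)$ for all $s\in\sscript(\escript)$, which by order-determinacy is exactly $\iscript^*(a)=a$; the remaining clause $\iscript^*(u)=a$ is the standing hypothesis. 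Making this homogeneous extension precise, and verifying that the extended dual relation remains valid, is the one place I expect to need care; everything else is a direct consequence of the additivity and monotonicity of $\iscript^*$.
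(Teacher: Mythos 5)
Your proof is correct and takes essentially the same route as the paper's: each condition is reduced to $\iscript ^*(a)=a$ under the standing hypothesis $\iscript ^*(u)=a$, using additivity and monotonicity of $\iscript ^*$ (which the paper uses tacitly and you make explicit), with the same pairings (i)$\Leftrightarrow$(ii), (i)$\Leftrightarrow$(iii) via $u=a+a'$, (iii)$\Leftrightarrow$(iv), (i)$\Leftrightarrow$(v), and (i)$\Leftrightarrow$(vi). The only divergence is one of rigor, not of method: in (vi) the paper applies the dual relation $s\sqbrac{\iscript ^*(c)}=\iscript (s)(c)$ at the substate $\iscript (s)$ without comment, whereas you flag and justify the positive-homogeneous extension of $\iscript$ and of the dual relation to substates.
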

\begin{proof}
(i)$\Leftrightarrow$(ii)\enspace If $a$ is $\iscript$-repeatable, then $\iscript ^*(a)=\iscript ^*(u)=a$ so
\begin{equation*}
a\sqbrac{\iscript}a=\iscript ^*(a)=\iscript ^*(u)=a
\end{equation*}
Conversely, if $a\sqbrac{\iscript}a=a$, then $a=\iscript ^*(a)=\iscript ^*(u)$ so $a$ is $\iscript$-repeatable.\newline
(i)$\Leftrightarrow$(iii)\enspace $a\sqbrac{\iscript}a'=\theta$ if and only if $\iscript ^*(a')=\theta$ and $\iscript ^*(u)=a$. But these are equivalent to
\begin{equation*}
\theta =\iscript ^*(u-a)=\iscript ^*(u)-\iscript ^*(a)=\iscript ^*(u)-a
\end{equation*}
which is equivalent to $a$ being $\iscript$-repeatable.
(iii)$\Leftrightarrow$(iv)\enspace Suppose $a\sqbrac{\iscript}a'=\theta$ and $a\perp b$. Then $b\le a'$ so that
\begin{equation*}
a\sqbrac{\iscript}b=\iscript ^*(b)\le\iscript ^*(a')=a\sqbrac{\iscript}a'=\theta
\end{equation*}
Hence, (iv) holds. Conversely, if $a\sqbrac{\iscript}b=\theta$ whenever $a\perp b$, then since $a\perp a'$ we have $a\sqbrac{\iscript}a'=\theta$.
(i)$\Leftrightarrow$(v)\enspace If $\iscript ^*(b)\le\iscript ^*(a)$ for all $b\in\escript$, then $a=\iscript ^*(u)\le\iscript ^*(a)$. Since $\iscript ^*(a)\le\iscript ^*(u)=a$ we have
$\iscript ^*(a)=a$. Hence, $a$ is $\iscript$-repeatable. Conversely, if $a$ is $\iscript$-repeatable, then 
\begin{equation*}
\iscript ^*(b)\le\iscript ^*(u)=a=\iscript ^*(a)
\end{equation*}
for all $b\in\escript$.
(i)$\Leftrightarrow$(vi)\enspace If (vi) holds, then for every $s\in\sscript (\escript )$ we obtain
\begin{equation*}
s(a)=s\sqbrac{\iscript ^*(u)}=\iscript (s)(u)=\iscript (s)\paren{\iscript ^*(u)}=\iscript (s)(a)=s\sqbrac {\iscript ^*(a)}
\end{equation*}
Hence, $\iscript ^*(a)=\iscript ^*(u)=a$ so $a$ is $\iscript$-repeatable. Conversely, if $a$ is $\iscript$-repeatable, then for all $s\in\sscript (\escript )$ we have
$\iscript ^*(u)=a$ and 
\begin{align*}
\iscript (s)(u)&=s\sqbrac{\iscript ^*(u)}=s(a)=s\sqbrac{\iscript ^*(a)}=\iscript (s)(a)\\
   &=\iscript (a)\paren{\iscript ^*(u)}=\iscript\paren{\iscript (s)}(u)
\end{align*}
so (iv) holds.
\end{proof}

An effect $a\in\escript$ is \textit{repeatable} if there exists an $\iscript\in\oscript (\escript )$ such that $a$ is $\iscript$-repeatable \cite{hz12}

\begin{cor}    
\label{cor34}
An effect $a\in\escript$ is repeatable if and only if $a=\theta$ or there exists an $s\in\sscript (\escript )$ such that $s(a)=1$.
\end{cor}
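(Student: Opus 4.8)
The plan is to prove both implications by reducing repeatability of $a$ to the existence of an operation $\iscript\in\oscript(\escript)$ with $\iscript^*(u)=\iscript^*(a)=a$, which is the characterization of $\iscript$-repeatability recorded just before the statement, and then to exploit the behavior of the updated state. The two directions are asymmetric: the forward direction is a short computation, while the converse requires producing an explicit witnessing operation.

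For the forward direction, suppose $a$ is repeatable, witnessed by some $\iscript$, and assume $a\neq\theta$. Since $\sscript(\escript)$ is order-determining and $a\not\le\theta$, I would first extract a state $s_0\in\sscript(\escript)$ with $s_0(a)>0$. Because $\iscripthat=a$, we have $s_0(\iscripthat\,)=s_0(a)\neq0$, so the updated state $s'=\iscript(s_0)/s_0(a)\in\sscript(\escript)$ is well defined. The key computation is then $s'(a)=\iscript(s_0)(a)/s_0(a)=s_0\sqbrac{\iscript^*(a)}/s_0(a)=s_0(a)/s_0(a)=1$, where the third equality uses $\iscript^*(a)=a$. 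Thus $s'$ is the required state with $s'(a)=1$, completing this half.

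For the converse I would construct an explicit operation in each of the two cases. If $a=\theta$, I would take the zero operation $\iscript(s)=0$, i.e. the null substate $0\cdot s$; this is affine into $\rmsub(\sscript(\escript))$ and satisfies $s\sqbrac{\iscript^*(b)}=\iscript(s)(b)=0$ for every $b$, so $\iscript^*$ is identically $\theta$ and in particular $\iscript^*(u)=\iscript^*(a)=\theta=a$. If instead some $s_1\in\sscript(\escript)$ has $s_1(a)=1$, I would define $\iscript(s)=s(a)\,s_1$. Verifying this is an operation is routine: $s(a)\in\sqbrac{0,1}$ makes $s(a)\,s_1$ a substate, and $\iscript$ is affine since $s\mapsto s(a)$ is. Reading off the dual from $s\sqbrac{\iscript^*(b)}=\iscript(s)(b)=s(a)\,s_1(b)=s\sqbrac{s_1(b)\,a}$ and invoking order-determinacy gives $\iscript^*(b)=s_1(b)\,a$; hence $\iscript^*(u)=s_1(u)\,a=a$ and $\iscript^*(a)=s_1(a)\,a=a$, so $a$ is $\iscript$-repeatable.

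I do not expect a genuine obstacle here; the conceptual content is the single line $s'(a)=s_0\sqbrac{\iscript^*(a)}/s_0(a)=1$, which says that a repeatable effect is driven, after a successful measurement, into a state where it occurs with certainty. The only points requiring care are justifying that $a\neq\theta$ yields a state with $s_0(a)>0$ (this is precisely order-determinacy applied to $a\not\le\theta$) and checking that the two constructed maps actually take values in $\rmsub(\sscript(\escript))$ and are affine. Note that $s\mapsto s(a)\,s_1$ is the pure Holevo operation studied in Section~4, but producing it by hand keeps the corollary independent of that later material.
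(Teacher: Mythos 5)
Your proposal is correct and takes essentially the same route as the paper: the converse uses exactly the paper's witnessing operation $\iscript (s)=s(a)s_1$ (together with the zero operation for $a=\theta$, which the paper covers with the bare remark that $\theta$ is repeatable), and the forward direction normalizes the post-measurement substate to the same state $s'=\iscript (s_0)/\iscript (s_0)(u)$ and shows it assigns probability $1$ to $a$. The only divergence is how $s'(a)=1$ is obtained: the paper routes through Theorem~\ref{thm33}(vi), computing $\iscript\paren{\iscript (s_0)}(u)/\iscript (s_0)(u)$, which tacitly applies $\iscript$ to a substate, whereas you use $\iscript ^*(a)=a$ directly in $s'(a)=s_0\sqbrac{\iscript ^*(a)}/s_0(a)=1$; this is slightly cleaner, and you also justify (via order-determinacy) the existence of $s_0$ with $s_0(a)>0$, a point the paper leaves implicit when it selects $s_0$ with $\iscript (s_0)\ne 0$.
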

\begin{proof}
Notice that $\theta$ is repeatable. Suppose $a$ is $\iscript$-repeatable and $a\ne\theta$. Then $\iscript ^*(u)=\iscript ^*(a)=a$. Let $s_0\in\sscript (\escript )$ satisfying
$\iscript (s_0)\ne 0$ and let $s_1=\iscript (s_0)/\iscript (s_0)(u)$. Then $s_1\in\sscript (\escript )$ and by Theorem~\ref{thm33} (vi) we have
\begin{equation*}
1=\frac{\iscript (s_0)(u)}{\iscript (s_0)(u)}=\frac{\iscript\sqbrac{\paren{\iscript (s_0)}}(u)}{\iscript (s_0)(u)}=\iscript (s_1)(u)=s_1\sqbrac{\iscript ^*(u)}=s_1(a)
\end{equation*}
Conversely, suppose $s_1(a)=1$. Define $\iscript\in\oscript (\escript )$ by $\iscript (s)=s(a)s_1$. We then obtain for all $b\in\escript$, $s\in\sscript (\escript )$ that
\begin{equation*}
s\sqbrac{\iscript ^*(b)}=\sqbrac{\iscript (s)}(b)=s(a)s_1(b)=s\sqbrac{s_1(b)a}
\end{equation*}
Hence, $\iscript ^*(b)=s_1(b)a$ for all $b\in\escript$. We then obtain
\begin{equation*}
\iscript ^*(u)=a=s_1(a)a=\iscript ^*(a)
\end{equation*}
so $a$ is $\iscript$-repeatable
\end{proof}

\section{Holevo Instruments}  
In this section all effect algebras will be assumed to be interval effect algebras with an order-determining set of states. Let $\escript _1=\sqbrac{\theta _1,u_1}$,
$\escript _2=\sqbrac{\theta _2,u_2}$ and suppose $a\in\escript _1$, $\beta\in\sscript (\escript _2)$. A \textit{pure Holevo operation with effect $a$ and state} $\beta$ denoted by $\hscript ^{(a,\beta )}\in\oscript (\escript _1,\escript _2)$ has the form $\hscript ^{(a,\beta )}(\alpha )=\alpha (a)\beta$ for all $\alpha\in\sscript (\escript _1)$. Since
\begin{equation*}
\alpha\sqbrac{\hscript ^{(a,\beta )*}(b)}=\sqbrac{\hscript ^{(a,\beta )}(\alpha )}(b)=\alpha (a)\beta (b)=\alpha\sqbrac{\beta (b)a}
\end{equation*}
for all $\alpha\in\sscript (\escript _1)$ we have $\hscript ^{(a,\beta )*}(b)=\beta (b)a$ for all $b\in\escript _2$. Then $\hscript ^{(a,\beta )*}(u_2)=a$ so
$\hscript ^{(a,\beta )}$ measures the effect $a\in\escript _1$. Since the state $\beta\in\sscript (\escript _2)$ can vary, this shows that an effect can be measured by many operations $\hscript ^{(a,\beta )}$ with $(\hscript ^{(a,\beta )})^\wedge = a$. We also see that $\hscript ^{(a,\beta )}$ is a channel if and only if $a=u_1$ and in this case,
$\hscript ^{(u_1,\beta )}(\alpha )=\beta$ for all $\alpha\in\sscript (\escript _1)$ and $\hscript ^{(u_1,\beta )*}(b)=\beta (b)u_1$ for all $b\in\escript _2$.

If $\beta _i\in\sscript (\escript _2)$ and $a_i\in\escript _1$, $i=1,2,\ldots ,n$ with $\sum\limits _{i=1}^na_i\le u_1$, then a
\textit{mixed Holevo operation with effects $a_i$ and states} $\beta _i$ has the form \cite{hol82,hol98}
\begin{equation*}
\iscript =\sum _{i=1}^n\hscript ^{(a_i,\beta _i)}\in\oscript (\escript _1,\escript _2)
\end{equation*}
We see that $\iscript$ is indeed an operation because $\iscript$ is affine and for all $\alpha\in\sscript (\escript _1)$ we have
\begin{align*}
\iscript (\alpha )(u_2)&=\sum _{i=1}^n\hscript ^{(a_i,\beta _i)}(\alpha )(u_2)=\sum _{i=1}^n\alpha (a_i)\beta _i(u_2)=\sum _{i=1}^n\alpha (a_i)\\
   &=\alpha\paren{\sum _{i=1}^na_i}\le\alpha (u_1)=1
\end{align*}
For any $b\in\escript _2$ we have
\begin{equation*}
\iscript ^*(b)=\sqbrac{\sum _{i=1}^n\hscript ^{(a_i,\beta _i)*}}(b)=\sum _{i=1}^n\hscript ^{(a_i,\beta _i)}(b)=\sum _{i=1}^n\beta _i(b)a_i
\end{equation*}
so that
\begin{equation*}
\iscripthat =\iscript ^*(u_2)=\sum _{i=1}^n\beta _i(u_2)a_i=\sum _{i=1}^na_i
\end{equation*}
and $\iscript$ measures the effect $\sum\limits _{i=1}^na_i$. We see that $\iscript$ is a channel if and only if $\sum\limits _{i=1}^na_i=u_2$ in which case
$\brac{a_i\colon i=1,2,\ldots ,n}$ is an observable on $\escript _1$. A mixed Holevo operation $\iscript$ is pure if $\iscript$ can be written $\iscript =\hscript ^{(a,\beta )}$ for some $a\in\escript _1$, $\beta\in\sscript (\escript _2)$.

If $\iscript _i\in\oscript (\escript _1,\escript _2)$ and $\lambda _i\in\sqbrac{0,1}$ with $\sum\limits _{i=1}^n\lambda _i\le 1$, then
$\sum\lambda _i\iscript _i\in\oscript (\escript _1,\escript _2)$ is called a \textit{mixture} of $\iscript _i$, $i=1,2,\ldots ,n$. A mixture of pure Holevo operations is a mixed Holevo operation. Indeed, let $\lambda _i\in\sqbrac{0,1}$ with $\sum _{i=1}^n\lambda _i\le 1$ and let $\iscript =\sum\limits _{i=1}^n\lambda _i\hscript ^{(a_i,\beta _i)}$ be a mixture of pure Holevo operations in $\oscript (\escript _1,\escript _2)$. Then for all $\alpha\in\sscript (\escript _1)$ we have
\begin{align}                
\label{eq41}
\iscript (\alpha )&=\sum _{i=1}^n\lambda _i\hscript ^{(a_i,\beta _i)}(\alpha )=\sum _{i=1}^n\lambda _i\alpha (a_i)\beta _i
   =\sum _{i=1}^n\alpha (\lambda _ia_i)\beta _i\notag\\
   &=\sum _{i=1}^n\hscript ^{(\lambda _ia_i,\beta _i)}(\alpha )
\end{align}
where $\sum\limits _{i=1}^n\lambda _ia_i\le\sum _{i=1}^n\lambda _iu_1\le u_1$. Hence, $\iscript$ is a mixed Holevo operation
$\sum\limits _{i=1}^n\hscript ^{(\lambda _ia_i,\beta _i)}$. We do not know if a mixed Holevo operation is necessarily a mixture of pure Holevo operations.

\begin{example}  
Let $\lambda _i\in\sqbrac{0,1}$ with $\sum\limits _{i=1}^n\lambda _i\le 1$ and let $\iscript =\sum _{i=1}^n\lambda _i\hscript ^{(a_i,\beta )}$, $a_i\in\escript _1$,
$\beta\in\sscript (\escript _2)$. Then $\iscript$ is a mixture of pure Holevo operations and hence $\iscript$ is a mixed Holevo operation in $\oscript (\escript _1,\escript _2)$. Applyng \eqref{eq41}, for all $\alpha\in\sscript (\escript _1)$ we have
\begin{equation*}
\iscript (\alpha )=\sqbrac{\sum _{i=1}^n\alpha (\lambda _ia_i)}\beta =\hscript ^{(\sum\lambda _ia_i,\beta )}(\alpha )
\end{equation*}
Hence, $\iscript =\hscript ^{\paren{\sum\lambda _ia_i,\beta}}$ so $\iscript$ is a pure Holevo operation. Similarly,
$\jscript =\sum\limits _{i=1}^n\lambda _i\hscript ^{(a,\beta _i)}$ is a mixed Holevo operation on $\oscript (\escript _1,\escript _2)$. Applying \eqref{eq41}, for all
$\alpha\in\sscript (\escript _1)$ we have
\begin{equation*}
\jscript (\alpha )=\alpha (a)\sum _{i=1}^n\lambda _i\beta _i=\hscript ^{(a,\sum\lambda _i\beta _i)}(\alpha )
\end{equation*}
Hence, $\jscript =\hscript ^{(a,\sum\lambda _i\beta _i)}$ so $\jscript$ is a pure Holevo operation.\hfill\qedsymbol
\end{example}

\begin{example}  
We give an example of a mixed Holevo operation that is not pure. Let $a\in\escript$ and suppose $\beta _1,\beta _2\in\sscript (\escript )$ satisfy $\beta _1(a)=1$,
$\beta _2(a')=1$. (Such states and effects exist in $\escript (H)$.) It follows that $\beta _1(a')=0$, $\beta _2(a)=0$. Let $\lambda\in (0,1)$ and suppose the mixed Holevo operation
\begin{equation*}
\iscript =\lambda\hscript ^{(a,\beta _1)}+(1-\lambda )\hscript ^{(a',\beta _2)}
\end{equation*}
is pure so there exists $b\in\escript$, $\beta\in\sscript (\escript )$ such that $\iscript =\hscript ^{(b,\beta )}$. We then have
\begin{equation*}
\lambda\alpha (a)\beta _1+(1-\lambda )\alpha (a')\beta _2=\alpha (b)\beta
\end{equation*}
for all $\alpha\in\sscript (\escript )$. Letting $\alpha =\beta _1$ we obtain $\lambda\beta _1=\beta _1(b)\beta$ and letting $\alpha =\beta _2$ gives
$(1-\lambda )\beta _2=\beta _2(b)\beta$. Hence, $\beta _1(b)\ne 0$, $\beta _2(b)\ne 0$ and we have
\begin{equation*}
\frac{\lambda}{\beta _1(b)}\,\beta _1=\frac{(1-\lambda )}{\beta _2(b)}\,\beta _2
\end{equation*}
Operating on $u$ gives $\lambda\beta _2(b)=(1-\lambda )\beta _1(b)$ so $\beta _1=\beta _2$ which results in a contradiction.\hfill\qedsymbol
\end{example}

If $a\in\escript _1$, $b\in\escript _2$, $\beta\in\sscript (\escript _2)$, then $\hscript ^{(a,\beta )}$ measures $a$ and the sequential product relative to $\hscript ^{(a,\beta )}$ becomes
\begin{equation*}
a\sqbrac{\hscript ^{(a,\beta )}}b=\hscript ^{(a,\beta )*}(b)=\beta (b)a
\end{equation*}
For shortness, we write $a\sqbrac{\beta}b=\beta (b)a$ and this is a product that depends on the state $\beta$. More generally, if $a=\sum a_i$, the sequential product of $a$ then $b$ relative to the mixed operation $\sum\hscript ^{(a_i,\beta _i)}$ becomes 
\begin{equation*}
a\sqbrac{\sum _i\hscript ^{(a_i,\beta _i)}}b=\sum _i\hscript ^{(a_i,\beta _i)*}(b)=\sum _i\beta _i(b)a_i=\sum _ia_i\sqbrac{\beta _i}b
\end{equation*}

A \textit{pure Holevo instrument} from $\escript _1$ to $\escript _2$ is a finite set \cite{hol82,hol98}
\begin{equation*}
\iscript =\brac{\hscript ^{(A_x,\beta _x)}\colon x\in\Omega _\iscript}
\end{equation*}
where $A=\brac{A_x\colon x\in\Omega _\iscript}\in\rmob (\escript _1)$, $\beta _x\in\sscript (\escript _2)$. We write
$\iscript _x=\hscript _x^{(A,\beta )}=\hscript ^{(A_x,\beta _x)}$ so
\begin{equation*}
\iscript _x(\alpha )=\hscript ^{(A_x,\beta _x)}(\alpha )=\alpha (A_x)\beta _x
\end{equation*}
is a pure Holevo operation. The corresponding operation-valued measure becomes $\iscript (\Delta )=\sum\limits _{x\in\Delta}\iscript _x$,
$\Delta\subseteq\Omega _\iscript$ and the $\alpha$-distribution is 
\begin{equation*}
\Phi _\alpha ^\iscript (\Delta )=\sum _{x\in\Delta}\iscript _x(\alpha )(u_2)=\sum _{x\in\Delta}\alpha (A_x)
\end{equation*}
for all $\alpha\in\sscript (\escript _1)$ so $\iscripthat _x=A_x$ and $\iscripthat =A$. The dual instrument $\hscript ^{(A,\beta )*}\colon\escript\ _1\to\escript _2$
satisfies
\begin{equation*}
\hscript _x^{(A,\beta )*}(b)=\hscript ^{(A_x,\beta _x)*}(b)=\beta _x(b)A_x
\end{equation*}
for all $x\in\Delta _{\hscript ^{(A,\beta )}}$, $b\in\escript _2$.

For $A,B\in\oscript (\escript _1)$, the \textit{Holevo sequential product of $A$ then} $B$ is the bi-observable in $\rmob (\escript _1)$ given by
\begin{equation*}
(A\circ B)_{xy}=\paren{A\sqbrac{\hscript ^{(A,\beta )}}B}_{xy}=\hscript _x^{(A,\beta )*}(B_y)=\beta _x(B_y)A_x
\end{equation*}
We then have the marginals
\begin{align*}
(A\circ B)_x^1&=\sum _y\beta _x(B_y)A_x=A_x\\
(A\circ B)_y^2&=\sum _x\beta _x(B_y)A_x=\sum _xA_x\sqbrac{\beta _x}B_y=\paren{B\ab{\hscript ^{(A,\beta )}}A}
\end{align*}
If $\hscript ^{(A,\alpha )}\in\rmin (\escript _1,\escript _2)$, $\hscript ^{(B,\beta )}\in\rmin (\escript _2,\escript _3)$ are pure Holevo instruments, then the sequential product in
$\rmin (\escript _1,\escript _3)$ becomes
\begin{align*}
(\hscript ^{(A,\alpha )}\circ\hscript ^{(B,\beta )})_{xy}(\gamma )&=(\hscript _x^{(A,\alpha )}\circ\hscript _y^{(B,\beta )})(\gamma )
   =\hscript _y^{(B,\beta )}\sqbrac{\hscript _x^{(A,\alpha )}(\gamma )}\\
   &=\hscript _y^{(B\,\beta )}\sqbrac{\gamma (A_x)\alpha _x}=\gamma (A_x)\hscript _y^{(B,\beta )}(\alpha _x)\\
   &=\gamma (A_x)\alpha _x(B_y)\beta _y=\gamma\sqbrac{\alpha _x(B_y)A_x}\beta _y\\
   &=\gamma\sqbrac{(A\circ B)_{xy}}\beta _y=\hscript _{xy}^{(A\circ B,\beta ')}(\gamma )
\end{align*}
for all $\gamma\in\sscript (\escript _1)$, where $\beta '_{xy}=\beta _y$ for all $x\in\Omega _A$, $y\in\Omega _B$. Therefore,
\begin{equation*}
\hscript ^{(A,\alpha )}\circ\hscript ^{(B,\beta )}=\hscript ^{(A\circ B,\beta ')}
\end{equation*}
is a pure Holevo bi-instrument that measures $A\circ B$ and has states $\beta '_{xy}=\beta _y$. The marginals become
\begin{align*}
(\hscript ^{(A\circ B,\beta ')1})_x(\gamma )&=\sum _y\gamma\sqbrac{(A\circ B)_{xy}}\beta '_{xy}=\sum _y\gamma (A_x)\alpha _x(B_y)\beta _y\\
   &=\gamma (A_x)\sum _y\alpha _x(B_y)\beta _y=\gamma (A_x)\delta _x=\hscript _x^{(A,\delta )}(\gamma )
\end{align*}
for all $\gamma\in\sscript (\escript _1)$ where $\delta _x=\sum _y\alpha _x(B_y)\beta _y\in\sscript (\escript _3)$. Hence, $\hscript ^{(A\circ B,\beta ')}=\hscript ^{(A,\delta )}$.
Moreover,
\begin{align*}
(\hscript ^{(A\circ B,\beta ')2})_y(\gamma )&=\sum _x\gamma\sqbrac{(A\circ B)_{xy}}\beta '_{xy}=\sum _x\gamma (A_x)\alpha _x(B_y)\beta _y\\
   &=\gamma\sqbrac{\sum _x\alpha _x(B_y)A_x}\beta _y=\hscript _y^{(C,\beta )}(\gamma )
\end{align*}
for every $\gamma\in\sscript (\escript _1)$ where $C\in\rmob (\escript _1)$ is the observable
\begin{equation*}
C_y=\sum _x\alpha _x(B_y)A_x=(A\circ B)^2(y)=\sum _xA_x\sqbrac{\alpha _x}B_y
\end{equation*}
Hence,
\begin{equation*}
\hscript ^{(A\circ B,\beta ')2}=\hscript ^{(C,\beta )}=\hscript ^{\paren{(A\circ B)^2,\beta}}
\end{equation*}
We conclude that $\hscript ^{(A,\delta )}$ and $\hscript ^{\paren{(A\circ B)^2,\beta}}$ coexist with joint instrument $\hscript ^{(A\circ B,\beta ')}$. Moreover, 
$\hscript ^{(A\circ B,\beta ')1}$ measures $A$ and $\hscript ^{(A\circ B,\beta ')2}$ measures $(A\circ B)^2$.

A \textit{mixed Holevo instrument} has the form $\iscript =\sum\lambda _i\hscript ^{(A_i,\beta _i)}$ where $A_i\in\rmob (\escript _1)$, $\beta _{ix}\in\sscript (\escript _2)$,
$\Omega _{A_i}=\Omega$, $i=1,2,\ldots ,n$, $\lambda _i\in\sqbrac{0,1}$, $\sum\lambda _i=1$. It is not completely clear that $\iscript$ is indeed an instrument and this is shown in the following theorem.

\begin{thm}    
\label{thm41}
{\rm{(i)}}\enspace $\iscript =\sum\lambda _i\hscript ^{(A_i,\beta _i)}\in\rmin (\escript _1,\escript _2)$.
{\rm{(ii)}}\enspace For $\alpha\in\sscript (\escript _1)$, letting $\alpha _i=\sum\limits _x\alpha (A_{ix})\beta _{ix}$ we have that $\alpha _i\in\sscript (\escript _2)$ and
$\iscriptbar (\alpha )=\sum\lambda _i\alpha _i\in\sscript (\escript _2)$.
{\rm{(iii)}}\enspace $\iscripthat =\sum\lambda _iA_i$.
\end{thm}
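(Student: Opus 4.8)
The plan is to check, in the order listed, that $\iscript$ is an instrument, that the stated convex combinations are genuinely states, and that the measured observable is $\sum\lambda_iA_i$; all three are direct computations once the double index is unwound. Throughout I would write $A_i=\brac{A_{ix}\colon x\in\Omega}$ and $\beta_i=\brac{\beta_{ix}\colon x\in\Omega}$, so that the $x$-component of $\iscript$ is $\iscript_x=\sum_i\lambda_i\hscript^{(A_{ix},\beta_{ix})}$.

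For (i) I would first argue that each $\iscript_x$ is an operation. It is a mixture of pure Holevo operations, so by the identity \eqref{eq41} it equals $\sum_i\hscript^{(\lambda_iA_{ix},\beta_{ix})}$; this is a bona fide mixed Holevo operation provided $\sum_i\lambda_iA_{ix}\le u_1$, which holds since $A_{ix}\le u_1$ gives $\sum_i\lambda_iA_{ix}\le\sum_i\lambda_iu_1=u_1$. Hence $\iscript_x\in\oscript(\escript_1,\escript_2)$. It then remains to show $\iscriptbar=\sum_x\iscript_x$ is a channel, i.e.\ $\iscriptbar(\alpha)(u_2)=1$ for every $\alpha$. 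Using $\beta_{ix}(u_2)=1$ and interchanging the sums, $\iscriptbar(\alpha)(u_2)=\sum_i\lambda_i\alpha\paren{\sum_xA_{ix}}=\sum_i\lambda_i\alpha(u_1)=\sum_i\lambda_i=1$, where the middle step uses $\sum_xA_{ix}=u_1$ because $A_i\in\rmob(\escript_1)$.

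For (ii) the point is that $\alpha_i=\sum_x\alpha(A_{ix})\beta_{ix}$ is a convex combination of the states $\beta_{ix}$: its weights satisfy $\sum_x\alpha(A_{ix})=\alpha\paren{\sum_xA_{ix}}=\alpha(u_1)=1$, again because $A_i$ is an observable. Convexity of $\sscript(\escript_2)$ then gives $\alpha_i\in\sscript(\escript_2)$. Reassembling the double sum, $\iscriptbar(\alpha)=\sum_x\iscript_x(\alpha)=\sum_i\lambda_i\sum_x\alpha(A_{ix})\beta_{ix}=\sum_i\lambda_i\alpha_i$, which is a convex combination (with weights $\lambda_i$) of states, hence again lies in $\sscript(\escript_2)$; this is simply the channel condition from (i) restated.

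For (iii) I would dualize. Since the dual is additive and $\hscript^{(a,\beta)*}(b)=\beta(b)a$, applying this to $\iscript_x=\sum_i\hscript^{(\lambda_iA_{ix},\beta_{ix})}$ yields $\iscript_x^*(b)=\sum_i\lambda_i\beta_{ix}(b)A_{ix}$ for all $b\in\escript_2$ (equivalently, one reads this off from $\alpha\sqbrac{\iscript_x^*(b)}=\iscript_x(\alpha)(b)$ using order-determination of $\sscript(\escript_1)$). Setting $b=u_2$ and using $\beta_{ix}(u_2)=1$ gives $\iscripthat_x=\iscript_x^*(u_2)=\sum_i\lambda_iA_{ix}$, that is, $\iscripthat=\sum_i\lambda_iA_i$. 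No step poses a real obstacle; the only things to watch are the bookkeeping of the index pair $(i,x)$ and applying each normalization ($\sum_xA_{ix}=u_1$, $\beta_{ix}(u_2)=1$, $\sum_i\lambda_i=1$) to the correct summation variable, together with the inequality $\sum_i\lambda_iA_{ix}\le u_1$ needed to invoke \eqref{eq41} in (i).
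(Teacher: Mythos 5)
Your proposal is correct and follows essentially the same route as the paper: unwind the double index, use $\sum_x A_{ix}=u_1$, $\beta_{ix}(u_2)=1$, $\sum_i\lambda_i=1$, and convexity of $\sscript(\escript_2)$ for (ii), then dualize with $\hscript^{(a,\beta)*}(b)=\beta(b)a$ for (iii). The only difference is organizational — the paper establishes (ii) first and reads off the channel property of $\iscriptbar$ from the decomposition $\iscriptbar(\alpha)=\sum_i\lambda_i\alpha_i$, whereas you verify $\iscriptbar(\alpha)(u_2)=1$ directly and then prove (ii) separately — but the computations are identical in substance.
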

\begin{proof}
For all $x\in\Omega$, $\alpha\in\sscript (\escript _1)$ we obtain
\begin{equation*}
\iscript _x(\alpha )=\sum _i\lambda _i\hscript _x^{(A_i,\beta _i)}(\alpha )=\sum _i\lambda _i\alpha (A_{ix})\beta _{ix}
\end{equation*}
Since $\alpha (A_{ix})\le 1$, we have $\sum\limits _i\lambda _i\alpha (A_{ix})\le\sum\limits _x\lambda _i=1$. Hence,
$\iscript\colon\sscript (\escript )\to\rmsub\paren{\sscript (\escript _2)}$ and it is clear that $\iscript$ is affine. Since $\sum \limits _x\alpha (A_{ix})=1$, $i=1,2,\ldots ,n$, we have $\alpha _i\in\sscript (\escript _2)$. Hence,
\begin{align*}
\iscriptbar (\alpha )&=\sum _x\iscript _x(\alpha )=\sum _x\sum _i\lambda _i\alpha (A_{ix})\beta _{Ix}=\sum _i\lambda _i\sqbrac{\sum _x\alpha (A_{ix})\beta _{ix}}\\
   &=\sum _i\lambda _i\alpha _i\in\sscript (\escript _2)
\end{align*}
This proves (ii) and shows that $\iscript _x$ is a mixed Holevo operation and $\iscriptbar$ is a channel. It follows that $\iscript\in\rmin (\escript _1,\escript _2)$ so (i) holds.
To prove (iii) we have
\begin{align*}
\iscript _x^*(u_2)&=\sum _i\lambda _i\hscript _x^{(A_i,\beta _i)*}(u_2)=\sum _i\lambda _i\hscript ^{(A_{ix},\beta _{ix})*}(u_2)\\
   &=\sum _i\lambda _i\beta _{ix}(u_2)A_{ix}=\sum _i\lambda _iA_{ix}=\paren{\sum _i\lambda _iA_i}_x
\end{align*}
Hence, $\iscripthat =\sum\lambda _iA_i$.
\end{proof}

As with operations, a mixed Holevo instrument need not be pure. 

The next result summarizes properties of the sequential product $a\sqbrac{\beta}b=\beta (b)a$ \cite{gn01,gud22,gud23}.

\begin{lem}    
\label{lem42}
{\rm{(i)}}\enspace $b\mapsto a\sqbrac{\beta}b$ and $a\mapsto a\sqbrac{\beta}b$ are convex and additive.
{\rm{(ii)}}\enspace $\theta\sqbrac{\beta}b=b\sqbrac{\beta}\theta =\theta$,
{\rm{(iii)}}\enspace $a\sqbrac{\beta}u=a$, $u\sqbrac{\beta}b=\beta (b)u$ and $u\sqbrac{\beta}b=b$ if and only if $b=\lambda u$ for $\lambda\in\sqbrac{0,1}$.
{\rm{(iv)}}\enspace $a\sqbrac{\beta}b\le a$,
{\rm{(v)}}\enspace the associative law holds: $a\sqbrac{\alpha}\paren{b\sqbrac{\beta}c}=\paren{a\sqbrac{\beta}b}\sqbrac{\beta}c$,
{\rm{(vi)}}\enspace $a\sqbrac{\alpha}b=\theta\not\Rightarrow b\sqbrac{\alpha}a=\theta$.
\end{lem}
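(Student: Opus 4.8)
The plan is to verify (i)--(v) by direct computation from the defining formula $a\sqbrac{\beta}b=\beta (b)a$, using only that each state $\beta\in\sscript (\escript )$ extends to a linear functional on $V$, so that $\beta$ is additive and positively homogeneous with $\beta (\theta )=0$, $\beta (u)=1$, and $\beta (b)\in\sqbrac{0,1}$ for every effect $b$, together with the ordered-linear-space structure of $(V,K)$. The one non-computational item is (vi), which asserts a non-implication and hence calls for a counterexample rather than a derivation.

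First I would treat (i). For the map $b\mapsto\beta (b)a$, additivity and convexity are inherited from the scalar $\beta (\cdot )$: if $b_1\perp b_2$ then $\beta (b_1+b_2)a=\beta (b_1)a+\beta (b_2)a$, and $\beta\paren{\lambda b_1+(1-\lambda )b_2}a=\lambda\beta (b_1)a+(1-\lambda )\beta (b_2)a$. For the map $a\mapsto\beta (b)a$ the scalar $\beta (b)$ is fixed, so these two properties reduce to linearity of scalar multiplication in $V$. Parts (ii)--(iv) are then one-line evaluations: $\theta\sqbrac{\beta}b=\beta (b)\theta =\theta$ and $b\sqbrac{\beta}\theta =\beta (\theta )b=\theta$ since $\beta (\theta )=0$; next $a\sqbrac{\beta}u=\beta (u)a=a$ and $u\sqbrac{\beta}b=\beta (b)u$ by definition, where $u\sqbrac{\beta}b=b$ forces $b=\beta (b)u$ with $\beta (b)\in\sqbrac{0,1}$, and conversely $b=\lambda u$ gives $\beta (b)=\lambda$ so $\beta (b)u=\lambda u=b$; and for (iv), since $\beta (b)\in\sqbrac{0,1}$ we have $a-\beta (b)a=\paren{1-\beta (b)}a\in K$, that is $a\sqbrac{\beta}b\le a$.

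For the associative law (v) I would expand both sides and compare. Using homogeneity of the states, the left-hand side is
\begin{equation*}
a\sqbrac{\alpha}\paren{b\sqbrac{\beta}c}=a\sqbrac{\alpha}\paren{\beta (c)b}=\alpha\paren{\beta (c)b}a=\beta (c)\alpha (b)a,
\end{equation*}
and the right-hand side reduces in the same fashion to $\beta (c)\alpha (b)a$, so the two agree. The only point requiring attention is the bookkeeping of states: the inner product contributes the factor $\beta (c)$ and the outer product the factor $\alpha (b)$, and each state must stay attached to its proper argument.

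Finally, for (vi) I would exhibit a counterexample. Since $a\sqbrac{\alpha}b=\alpha (b)a$ vanishes precisely when $\alpha (b)=0$ or $a=\theta$, while $b\sqbrac{\alpha}a=\alpha (a)b$ vanishes precisely when $\alpha (a)=0$ or $b=\theta$, it suffices to choose a state $\alpha$ and nonzero effects $a,b$ with $\alpha (b)=0$ but $\alpha (a)\neq 0$; then $a\sqbrac{\alpha}b=\theta$ whereas $b\sqbrac{\alpha}a=\alpha (a)b\neq\theta$. Such data exist in $\escript (H)$: taking $H$ two-dimensional, $\alpha$ the state given by the density operator projecting onto one basis vector, $b$ the projection onto the orthogonal vector, and $a=u$, one has $\alpha (b)=0$ and $\alpha (a)=1$. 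I do not anticipate a genuine obstacle anywhere; once the defining formula and the linearity of states are in hand the whole lemma is routine, and (vi) is simply the one item that is a non-implication rather than an identity.
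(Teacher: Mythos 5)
Your proof is correct and takes essentially the same route as the paper's: (i)--(iv) are the direct evaluations the paper dismisses as straightforward, (v) is the same expansion of both sides to $\beta (c)\alpha (b)a$, and (vi) is the paper's exact counterexample ($a=u$, $b\ne\theta$, $\alpha (b)=0$), which you merely instantiate concretely in $\escript (H)$. Note that both you and the paper actually prove $a\sqbrac{\alpha}\paren{b\sqbrac{\beta}c}=\paren{a\sqbrac{\alpha}b}\sqbrac{\beta}c$, silently correcting the typo in the statement of (v), whose right-hand side reads $\paren{a\sqbrac{\beta}b}\sqbrac{\beta}c$ and would give $\beta (c)\beta (b)a$ instead.
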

\begin{proof}
The proofs of (i)--(iv) are straightforward. To prove (v) we have
\begin{align*}
a\sqbrac{\alpha}\paren{b\sqbrac{\beta}c}&=a\sqbrac{\alpha}\paren{\beta (c)b}=\beta (c)a\sqbrac{\alpha}b=\beta (c)\alpha (b)a\\
   &=\paren{a\sqbrac{\alpha}b}\beta (c)=\paren{a\sqbrac{\alpha}b}\sqbrac{\beta}c
\end{align*}
To prove (vi), let $a=u$, $b\ne\theta$ and $\alpha\in\sscript (\escript )$ with $\alpha (b)=0$. Then 
\begin{equation*}
a\sqbrac{\alpha}b=\alpha (b)a=\theta
\end{equation*}
but
\begin{equation*}
b\sqbrac{\alpha}a=\alpha (a)b=b\ne\theta\qedhere
\end{equation*}
\end{proof}

For $\alpha\in\sscript (\escript )$, $a,b\in\escript$, the $\alpha$-\textit{commutant} of $a,b$ is 
\begin{equation*}
\sqbrac{a,b}_\alpha=a\sqbrac{\alpha}b-b\sqbrac{\alpha}a=\alpha (b)a-\alpha (a)b
\end{equation*}
If $\alpha (a),\alpha (b)\ne 0$ we see that $\sqbrac{a,b}_\alpha =0$ if and only if $a=\lambda b$ for some $\lambda\ge 0$. It is also clear that
$a\mapsto\sqbrac{a,b}_\alpha$ is additive and convex. Moreover, 
\begin{equation*}
\sqbrac{\lambda a,b}_\alpha =\sqbrac{a,\lambda b}_\alpha =\lambda\sqbrac{a,b}_\alpha
\end{equation*}
for all $\lambda\in\sqbrac{0,1}$. We also have: $\sqbrac{a,b}_\alpha =-\sqbrac{b,a}_\alpha$, $\alpha\paren{\sqbrac{a,b}_\alpha}=0$, $\sqbrac{a,a}_\alpha =0$.
The next result gives other properties of $\sqbrac{a,b}_\alpha$.

\begin{lem}    
\label{lem43}
{\rm{(i)}}\enspace If $\sqbrac{a,c}_\alpha =\theta$, then $\sqbrac{a\sqbrac{\alpha}b,c}=\theta$ for all $b\in\escript$.
{\rm{(ii)}}\enspace $\sqbrac{a+b,c}_\alpha=\sqbrac{a,c}_\alpha +\sqbrac{b,c}_\alpha$.
{\rm{(iii)}}\enspace In general $\sqbrac{a,u}_\alpha\ne\theta$.
{\rm{(iv)}}\enspace $\sqbrac{a,b}_\alpha =\theta\not\Rightarrow\sqbrac{a,b'}_\alpha =\theta$.
\end{lem}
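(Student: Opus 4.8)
The plan is to treat all four parts by direct computation from the defining formula $\sqbrac{a,b}_\alpha=\alpha(b)a-\alpha(a)b$, together with the homogeneity and additivity properties recorded immediately before the lemma. For part (i) (reading the asserted equation as $\sqbrac{a\sqbrac{\alpha}b,c}_\alpha=\theta$), I would first rewrite the inner sequential product as $a\sqbrac{\alpha}b=\alpha(b)a$, so that $\sqbrac{a\sqbrac{\alpha}b,c}_\alpha=\sqbrac{\alpha(b)a,c}_\alpha$. Because $\alpha$ is a state and $b\in\escript$, the scalar $\alpha(b)$ lies in $\sqbrac{0,1}$, so the homogeneity relation $\sqbrac{\lambda a,c}_\alpha=\lambda\sqbrac{a,c}_\alpha$ applies and yields $\alpha(b)\sqbrac{a,c}_\alpha=\alpha(b)\theta=\theta$, invoking the hypothesis $\sqbrac{a,c}_\alpha=\theta$.

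For part (ii) I would expand both sides using the fact that each state extends to a linear functional on $V$, so $\alpha(a+b)=\alpha(a)+\alpha(b)$ when $a\perp b$. Then $\sqbrac{a+b,c}_\alpha=\alpha(c)(a+b)-\alpha(a+b)c$ regroups termwise into $\sqbrac{a,c}_\alpha+\sqbrac{b,c}_\alpha$; this merely makes explicit the additivity in the first argument already noted before the lemma.

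For part (iii) the governing computation is $\sqbrac{a,u}_\alpha=\alpha(u)a-\alpha(a)u=a-\alpha(a)u$, using $\alpha(u)=1$. This vanishes precisely when $a=\alpha(a)u$, that is, when $a$ is a scalar multiple of $u$. To see that $\sqbrac{a,u}_\alpha\ne\theta$ in general I would exhibit an effect that is not such a multiple; the natural choice is a rank-one projection $a$ in a Hilbert space effect algebra $\escript (H)$ with $\dim H\ge 2$, for which $a-\alpha(a)I\ne 0$ for every state $\alpha$.

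For part (iv) the structural key is the identity $\sqbrac{a,b}_\alpha+\sqbrac{a,b'}_\alpha=\sqbrac{a,u}_\alpha$, which follows from the same additivity computation as in (ii), now in the second argument, applied to $b+b'=u$. Hence $\sqbrac{a,b}_\alpha=\theta$ forces $\sqbrac{a,b'}_\alpha=\sqbrac{a,u}_\alpha$, so it suffices to arrange the right-hand side to be nonzero. Specializing to $b=a$ does this cleanly: the relation $\sqbrac{a,a}_\alpha=\theta$ is already recorded, while $b'=a'$ gives $\sqbrac{a,a'}_\alpha=\sqbrac{a,u}_\alpha=a-\alpha(a)u$, nonzero for the same rank-one projection used in (iii). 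No part poses a genuine obstacle; the only point requiring care is selecting in (iii) and (iv) an effect $a$ that is provably not a scalar multiple of $u$, so that the nonvanishing conclusions are justified rather than merely asserted.
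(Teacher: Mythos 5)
Your proposal is correct, and for parts (i)--(iii) it follows essentially the paper's own argument. In (i) the paper re-expands $\sqbrac{\alpha (b)a,c}_\alpha$ from the definition and substitutes $\alpha (c)a=\alpha (a)c$, whereas you invoke the homogeneity $\sqbrac{\lambda a,c}_\alpha =\lambda\sqbrac{a,c}_\alpha$ recorded just before the lemma; this is a cosmetic difference. Part (ii) is the identical termwise expansion, and (iii) is the same computation $\sqbrac{a,u}_\alpha =a-\alpha (a)u$, the only distinction being that the paper merely posits an effect with $a\ne\alpha (a)u$ while you exhibit a concrete witness (a rank-one projection in $\escript (H)$ with $\dim H\ge 2$), which makes the ``in general'' claim fully justified rather than conditional. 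The one genuine divergence is (iv): the paper takes $b=\theta$, so that $\sqbrac{a,\theta}_\alpha =\theta$ trivially and $b'=u$ reduces the claim to (iii); you instead take $b=a$ and compute $\sqbrac{a,a'}_\alpha$ via the identity $\sqbrac{a,b}_\alpha +\sqbrac{a,b'}_\alpha =\sqbrac{a,u}_\alpha$. Your route is slightly more informative: it shows the failure in (iv) is generic, since whenever $\sqbrac{a,b}_\alpha =\theta$ the complement automatically satisfies $\sqbrac{a,b'}_\alpha =\sqbrac{a,u}_\alpha$, so \emph{any} effect $a$ that is not a scalar multiple of $u$ yields a counterexample for any such $b$; the paper's choice $b=\theta$ is just the most economical instance of this phenomenon. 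Both arguments are valid, and your supporting identity needs only the additivity of the state $\alpha$ together with $b+b'=u$, both of which the paper has available.
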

\begin{proof}
(i)\enspace If $\sqbrac{a,c}_\alpha =\theta$, then $\alpha (c)a=\alpha (a)c$. Hence,
\begin{align*}
\sqbrac{a\sqbrac{\alpha}b,c}_\alpha&=\sqbrac{\alpha (b)a,c}_\alpha =\alpha (c)\alpha (b)a-\alpha (b)\alpha (a)c\\
   &=\alpha (b)\alpha (a)c-\alpha (b)\alpha (a)c=\theta
\end{align*}
(ii)\enspace This follows from
\begin{align*}
\sqbrac{a+b,c}_\alpha&=\alpha (c)(a+b)-\alpha (a+b)c=\alpha (c)a-\alpha (a)c+\alpha (c)b-\alpha (b)c\\
   &=\sqbrac{a,c}_\alpha +\sqbrac{b,c}_\alpha
\end{align*}
(iii)\enspace If $\alpha (a)u\ne a$, we obtain
\begin{equation*}
\sqbrac{a,u}_\alpha =\alpha (u)a-\alpha (a)u=a-\alpha (a)u\ne\theta
\end{equation*}
(iv)\enspace Letting $b=\theta$ and $a\ne\alpha (a)u$ we have $\sqbrac{a,\theta}_\alpha =\theta$ but
\begin{equation*}
\sqbrac{a,b'}_\alpha =\sqbrac{a,u}_\alpha =\alpha (u)a-\alpha (u)a-\alpha (a)u=a-\alpha (a)u\ne\theta\qedhere
\end{equation*}
\end{proof}

\end{document}